\newcommand\myshade{70} 
\newcommand{\bea}{\begin{eqnarray}}
\newcommand{\eea}{\end{eqnarray}}
\newcommand{\bean}{\begin{eqnarray*}}
\newcommand{\eean}{\end{eqnarray*}}
\newcommand{\sbinom}[2]{\left( \begin{array}{c} #1 \\ #2 \end{array} \right) }
\newcommand{\Perr}{\mathsf{P_{err}}}
\newcommand{\field}[1]{\mathbb{#1}}
\newcommand{\F}{\field{F}}
\newcommand{\cC}{{\cal C}}
\newcommand{\cL}{{\cal L}}
\newcommand{\cO}{{\cal O}}
\newcommand{\cS}{{\cal S}}
\newcommand{\cU}{{\cal U}}
\newcommand{\sG}{\script{G}}
\newcommand{\sP}{\script{P}}
\newcommand{\bfc}{{\boldsymbol c}}
\newcommand{\bfe}{{\boldsymbol e}}
\newcommand{\bfv}{{\boldsymbol v}}
\newcommand{\bfx}{{\boldsymbol x}}
\newcommand{\bfy}{{\boldsymbol y}}
\newcommand{\bfX}{{\mathbf X}}
\newcommand{\bfY}{{\mathbf Y}}
\newcommand{\FF}{\mathbb{F}}
\DeclareMathOperator*{\argmax}{arg\,max}
\DeclareMathAlphabet{\mathbfsl}{OT1}{cmr}{bx}{it}
\newcommand{\uuu}{\kern-1pt\mathbfsl{u}\kern-0.5pt}
\newcommand{\vvv}{\kern-1pt\mathbfsl{v}\kern-0.5pt}
\newcommand{\myboxplus}{\kern1pt\mbox{\small$\boxplus$}}
\makeatletter \DeclareRobustCommand{\sbinom}{\genfrac[]\z@{}}
\newcommand{\G}[2]{\sbinom{{#1}\kern-1pt}{{#2}\kern-1pt}}
\newcommand{\Gq}[2]{\sbinom{{#1}\kern-0.25pt}{{#2}\kern-0.25pt}}
\newcommand{\Fq}{\smash{{\mathbb F}_{\!q}}}
\newcommand{\Ps}{\smash{{\sP\kern-2.0pt}_q\kern-0.5pt(n)}}
\newcommand{\sPs}{\smash{{\sP\kern-1.5pt}_q(n)}}
\newcommand{\Ptwo}{\smash{{\sP\kern-2.0pt}_2\kern-0.5pt(n)}}
\newcommand{\Ptwom}{\smash{{\sP\kern-2.0pt}_2\kern-0.5pt(m)}}
\newcommand{\Ptwonm}{\smash{{\sP\kern-2.0pt}_2\kern-0.5pt(n+m)}}
\newcommand{\Ptwoa}{\smash{{\sP\kern-2.0pt}_2\kern-0.5pt(1)}}
\newcommand{\Ptwob}{\smash{{\sP\kern-2.0pt}_2\kern-0.5pt(2)}}
\newcommand{\Ptwoc}{\smash{{\sP\kern-2.0pt}_2\kern-0.5pt(3)}}
\newcommand{\Ptwod}{\smash{{\sP\kern-2.0pt}_2\kern-0.5pt(4)}}
\newcommand{\Ptwoe}{\smash{{\sP\kern-2.0pt}_2\kern-0.5pt(5)}}
\newcommand{\Ptwof}{\smash{{\sP\kern-2.0pt}_2\kern-0.5pt(6)}}
\newcommand{\Ptwokm}{\smash{{\sP\kern-2.0pt}_2\kern-0.5pt(2k-1)}}
\newcommand{\Pone}{\smash{{\sP\kern-2.5pt}_2\kern-0.5pt(n{-}1)}}
\newcommand{\Gr}{\smash{{\sG\kern-1.5pt}_q\kern-0.5pt(n,k)}}
\newcommand{\Gi}{\smash{{\sG\kern-1.5pt}_q\kern-0.5pt(n,i)}}
\newcommand{\Gj}{\smash{{\sG\kern-1.5pt}_q\kern-0.5pt(n,j)}}
\newcommand{\Grmk}{\smash{{\sG\kern-1.5pt}_q\kern-0.5pt(n,n-k)}}
\newcommand{\Grdk}{\smash{{\sG\kern-1.5pt}_q\kern-0.5pt(2k,k)}}
\newcommand{\Grekappa}{\smash{{\sG\kern-1.5pt}_q\kern-0.5pt(n,e+1-\kappa)}}
\newcommand{\Grtwoekappa}{\smash{{\sG\kern-1.5pt}_q\kern-0.5pt(n,2e+1-\kappa)}}
\newcommand{\Gremkappa}{\smash{{\sG\kern-1.5pt}_q\kern-0.5pt(n,e-\kappa)}}
\newcommand{\Gn}{\smash{{\sG\kern-1.5pt}_2\kern-0.5pt(n,n{-}1)}}
\newcommand{\Gnq}{\smash{{\sG\kern-1.5pt}_q\kern-0.5pt(n,n{-}1)}}
\newcommand{\Gone}{\smash{{\sG\kern-1.5pt}_2\kern-0.5pt(n,1)}}
\newcommand{\Gqone}{\smash{{\sG\kern-1.5pt}_q\kern-0.5pt(n,1)}}
\newcommand{\GTwo}{\smash{{\sG\kern-1.5pt}_2\kern-0.5pt(n,k)}}
\newcommand{\GTwonk}[2]{{\smash{{\sG\kern-1.5pt}_2\kern-0.5pt({#1},{#2})}}}
\newcommand{\Gnk}{\smash{{\sG\kern-1.5pt}_2\kern-0.5pt(n,n{-}k)}}
\newcommand{\Greone}{\smash{{\sG\kern-1.5pt}_q\kern-0.5pt(n,e{+}1)}}
\newcommand{\Gretwo}{\smash{{\sG\kern-1.5pt}_q\kern-0.5pt(n,e{+}2)}}
\newcommand{\be}[1]{\begin{equation}\label{#1}}
\newcommand{\ee}{\end{equation}}
\newcommand{\Eras}{\mathsf{Erasure}}
\newcommand{\Err}{\mathsf{Error}}
\newcommand{\Suc}{\mathsf{Success}}
\newcommand{\NEras}{\mathsf{Num_{Erasure}}}
\newcommand{\NErr}{\mathsf{Num_{Error}}}
\newcommand{\NSuc}{\mathsf{Num_{Success}}}
\newcommand{\tf}{\tfrac}
\newtheorem{theorem}{Theorem}
\newtheorem{lemma}{Lemma}
\newtheorem{corollary}{Corollary}
\newtheorem{definition}{Definition}
\newtheorem*{InfTheorem}{Informal Theorem}
\begin{document}

\author{\IEEEauthorblockN{ {Shubhransh~Singhvi}\IEEEauthorrefmark{3}, {Han~Mao~Kiah}\IEEEauthorrefmark{2} and {Eitan~Yaakobi}\IEEEauthorrefmark{3}}\\
\IEEEauthorblockA{\IEEEauthorrefmark{3}%
                     Department of Computer Science, 
                     Technion, Israel}\\  \IEEEauthorblockA{\IEEEauthorrefmark{2}%
                     School of Physical and Mathematical Sciences, 
		NTU, Singapore}\qquad
    \IEEEauthorblockA{
    \\Email: shubhranshsinghvi2001@gmail.com, HMKiah@ntu.edu.sg, yaakobi@cs.technion.ac.il}
 }

\title{Reconstructing Reed--Solomon Codes \\from Multiple Noisy Channel Outputs}
\date{\today}
 \maketitle
\thispagestyle{empty}	
\begin{abstract}
The \emph{sequence reconstruction problem}, introduced by Levenshtein in 2001, considers a communication setting in which a sender transmits a codeword and the receiver observes $K$ independent noisy versions of this codeword. In this work, we study the problem of \emph{efficient} reconstruction when each of the $K$ outputs is corrupted by a $q$-ary discrete memoryless symmetric (DMS) substitution channel with substitution probability $p$. Focusing on Reed--Solomon (RS) codes, we adapt the Koetter--Vardy soft-decision decoding algorithm to obtain an efficient reconstruction algorithm. For sufficiently large blocklength and alphabet size, we derive an explicit rate threshold, depending only on $(p,K)$, such that the transmitted codeword can be reconstructed with arbitrarily small probability of error whenever the code rate $R$ lies below this threshold.

\end{abstract}





\section{Introduction}
The \emph{sequence reconstruction problem}, introduced by Levenshtein~\cite{L01A,L01B}, considers a communication model in which a codeword is transmitted through multiple noisy channels, and the receiver observes several corrupted versions of the same input. A central question in this framework is to determine the minimum number of channel outputs required to uniquely reconstruct the transmitted codeword. Levenshtein showed that, in the worst case, for unique reconstruction, the number of channel outputs must be greater than the maximum intersection size between the sets of possible outputs (balls) corresponding to any two distinct channel inputs. Beyond uniqueness, an equally important challenge is the design of \emph{efficient decoding algorithms} that can successfully reconstruct the codeword from the noisy observations. While the problem itself was introduced in~\cite{L01A}, the development of efficient decoders has received comparatively less attention; see, for instance,~\cite{YB18,AY21,PGK22,JLL23,SCKY24}.

Originally motivated by applications in biology and chemistry, the sequence reconstruction model is also relevant to wireless sensor networks. More recently, it has attracted significant interest due to its applicability to DNA storage systems, where the same encoded DNA strand is sequenced multiple times, resulting in multiple noisy copies, or \emph{reads}, of the original strand~\cite{CGK12,Getal13,YKGMZM15}. Practical DNA storage systems often use concatenated code constructions, where the inner code introduces redundancy within each strand to correct symbol-level insertion/deletion/substitution errors, while the outer code adds redundancy across strands to recover from strand losses and correct residual errors left uncorrected by the inner code.
Maximum distance separable (MDS) codes, most notably Reed–Solomon (RS) codes \cite{RS60}, are widely used in practice as outer codes due to their optimal erasure and error correction capabilities \cite{Grass2015ChemicalPreservation, Erlich2017DNAFountain, Organick2018RandomAccess, Press2020HEDGES}. This motivates the study of efficient reconstruction of RS codewords from multiple noisy reads corrupted by substitution errors. We next give a definition of RS codes: 
 \begin{definition}\label{RS}
		Let $\alpha_1, \alpha_2, \ldots, \alpha_n \in\F_q$ be distinct points in a finite field $\mathbb{F}_q$ of order  $q\geq n$. For $k\leq n$, the $[n,k]_q$ RS code,  
		defined by the evaluation vector $\bm \alpha = ( \alpha_1, \ldots, \alpha_n )$, is the set of codewords 
        \[\left \lbrace c_f = \left( f(\alpha_1), \ldots, f(\alpha_n) \right) \mid f\in \mathbb{F}_q[x],\deg f < k  \right \rbrace\;.\] 
\end{definition}

In this work, we study the problem of reconstructing a transmitted RS codeword from $K$ independent noisy reads. Each read is corrupted by a $q$-ary discrete memoryless symmetric (DMS) substitution channel with substitution probability $p$. Given the multiset of $K$ received channel outputs, the objective is to reconstruct the transmitted RS codeword in polynomial time with vanishing probability of error as the blocklength grows.

\subsection{Prior work}
Most of the works on sequence reconstruction focus on worst-case uniqueness guarantees, while the design and analysis of efficient reconstruction algorithms under probabilistic noise models has received comparatively little attention. In~\cite{SCKY24}, the authors introduced a soft-decision reconstruction framework for RS codes under a \emph{bounded adversarial substitution model}. The approach aggregates soft reliability information across multiple reads into a carefully designed multiplicity matrix and applies the Koetter--Vardy soft-decision decoding algorithm. This reconstruction algorithm achieves input-linear time complexity and is capable of correcting substitution errors beyond the Johnson radius. Although the bounded-error model yields strong worst-case guarantees, it does not characterize typical-case behavior under natural probabilistic noise models.

\vspace{-0.2cm}
\subsection{Our Contributions}
We study the problem of reconstructing RS codewords under a
\emph{probabilistic $K$-draw $q$-ary DMS substitution channel model}.
The receiver observes $K$ independent noisy reads of the same RS codeword, and the
objective is efficient reconstruction with high probability.

In contrast to existing probabilistic reconstruction approaches that rely on hard
decisions or majority decoding, our framework explicitly exploits soft
reliability information extracted across multiple reads. We show that symbol positions with ambiguous observations, often treated as erasures
in hard-decision schemes, provide useful decoding information when incorporated through
a soft-decision multiplicity assignment. As in~\cite{SCKY24}, we aggregate the $K$ reads into a carefully designed multiplicity
matrix and apply the Koetter--Vardy soft-decision decoding algorithm.
Unlike the bounded adversarial setting of~\cite{SCKY24}, this work analyzes
the multiplicity construction and decoding conditions under a probabilistic noise
model, requiring new tools to characterize typical decoding behavior. We provide an explicit finite-length probabilistic analysis guaranteeing decoding
success with high probability. By asymptotically simplifying these conditions, we derive a clean achievable-rate threshold for reliable sequence reconstruction as the blocklength grows.

\begin{InfTheorem}[Informal statement of Theorems~3 and~4]
Consider a RS code of rate $R$ transmitted over a $K$-draw $q$-ary DMS
channel with substitution probability $p$.
There exists a polynomial-time reconstruction algorithm based on Koetter--Vardy
soft-decision decoding such that, for sufficiently large blocklength and alphabet
size, the transmitted codeword can be reconstructed with arbitrarily small error
probability whenever $R$ lies below an explicit threshold depending only on $(p,K)$.
\end{InfTheorem}

Fig.~\ref{fig:rateRegion-comp} illustrates the achievable-rate region and shows significant gains over hard-decision majority decoding. It also shows the capacity value in each case.

\begin{figure}
    \centering
    \includegraphics[scale=0.4
    ]{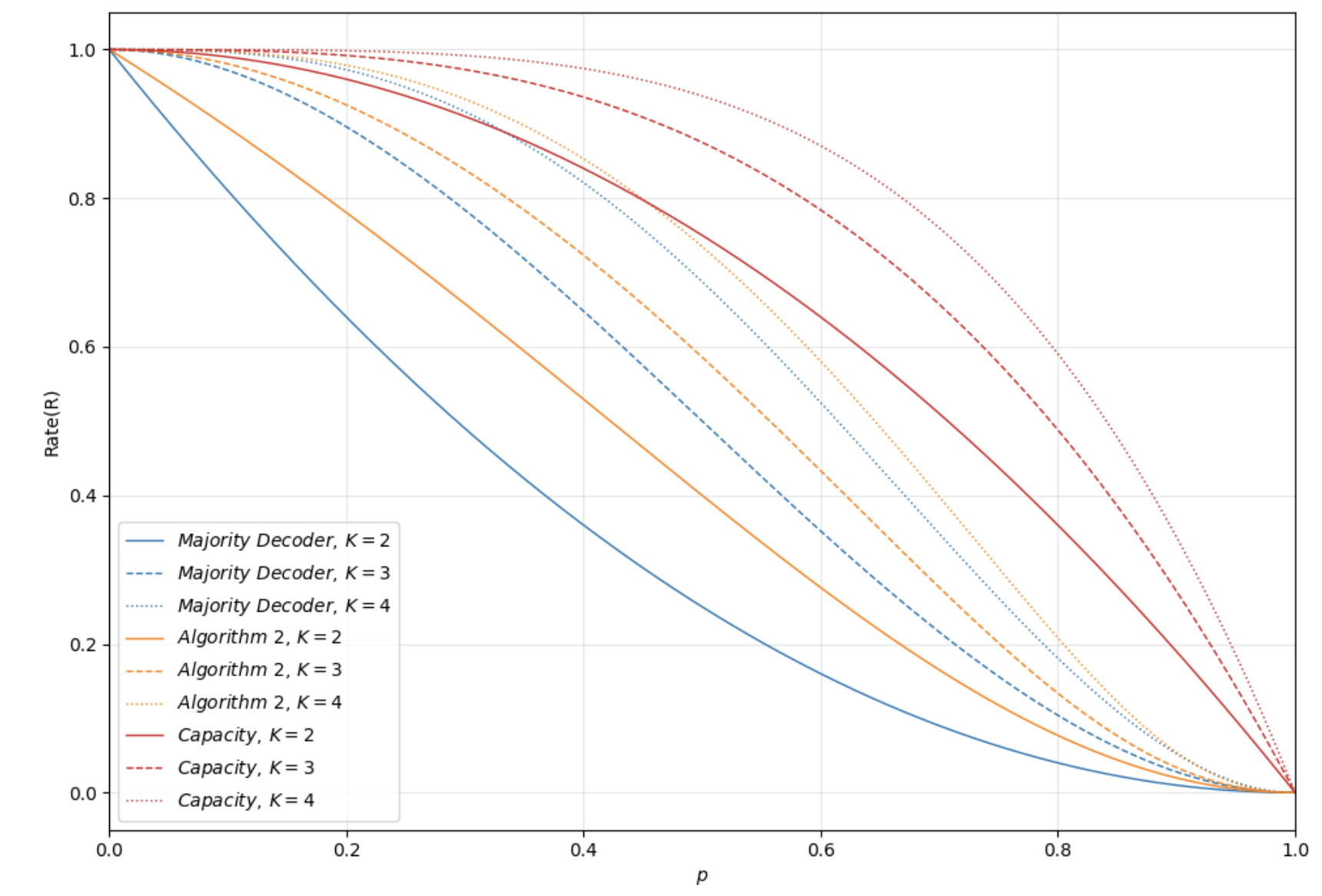}
    \caption{Comparison of achievable rate regions of Algorithm~\ref{alg:read-rec-K-reads} (Theorem~\ref{thm:R_upBound_DMS}), hard-decision majority-decoder (Corollary~\ref{cor:majority}) and channel capacity (Theorem~\ref{thm:DMSqKpCapacity}) for $K=2$, $K=3$ and $K=4$. 
    }
    \label{fig:rateRegion-comp}
\end{figure}

\subsection{Related Work}
Solving the reconstruction problem was studied in~\cite{L01A} with respect to several channels such as the Hamming distance, the Johnson graphs, and other metric distances. In~\cite{K08,K07,KLS07}, it was analyzed for permutations, and in~\cite{LKKM08,LS09} for other general error graphs. The problem was further investigated in~\cite{YSLB13} for permutations under the Kendall’s $\tau$ distance and the Grassmann graph. For insertion and deletion channels, a substantial body of work has developed various reconstruction guarantees and bounds; see, for example,~\cite{GY16,SGSD17,PGK22,sun23,lan25,pham25}. Connections between the reconstruction problem and associative memory models have also been established in several works, including~\cite{JL14,JL15,JL16,YB18}. The problem was also studied in~\cite{JV04} in the context of asymptotically improving the Gilbert–Varshamov bound. Complementary to these lines of work,~\cite{WangYaakobiZhang2024SimpleConditions} analyzes the expected number of channel outputs required for unique reconstruction under bounded adversarial substitution channels, while~\cite{PapadopoulouRameshwarWachterZeh2024Views} derives sufficient conditions for unique reconstruction in the same setting.

\subsection{Soft Decoding \'a la Koetter and Vardy}

Koetter and Vardy~\cite{KV03} extended the Guruswami--Sudan (GS) list-decoding algorithm~\cite{GS99} by incorporating probabilistic reliability information associated with the received symbols. This extension enables \emph{soft-decision decoding} by allowing unequal multiplicities to be assigned to interpolation points based on their reliabilities. A convenient way to represent the interpolation points and their assigned multiplicities is through a \emph{multiplicity matrix}.

\begin{definition}[\hspace{-0.1ex}\cite{KV03}] 
Let $\delta_0,\delta_1, \ldots, \delta_{q-1}$ be some ordering of $\F_q$.
A multiplicity matrix, denoted by $M$, is a $(q \times n)$-matrix with entries $m_{i,j}$ denoting the multiplicity of $(\delta_i, \alpha_j)$.
\end{definition}

We provide a high-level description of the Koetter--Vardy (KV) soft-decision decoding algorithm, following the exposition in~\cite{SCKY24}. Given a multiplicity matrix $M$, the KV algorithm computes a nonzero bivariate polynomial $Q_M(X,Y)$ of minimum $(1,k-1)$-weighted degree such that $Q_M(X,Y)$ has a zero of multiplicity at least $m_{i,j}$ at each point $(\alpha_j, \delta_i)$ for which $m_{i,j} \neq 0$. The polynomial $Q_M(X,Y)$ is then factorized to produce a list of candidate codewords \cite{KV03}. The \emph{cost} of constructing $Q_M(X,Y)$ from a multiplicity matrix $M$, denoted by $C(M)$, corresponds to the number of linear constraints imposed during the interpolation step and is defined as follows.

\vspace{-0.2cm}
\begin{definition}[\hspace{-0.1ex}\cite{KV03}] 
\label{def:cost}
The cost for a $q\times n$ multiplicity matrix $M$ is defined as
$C(M) := \sum_{i=0}^{q-1}\sum_{j=0}^{n-1}\binom{m_{i,j}+1}{2}$.
\end{definition}

\vspace{-0.2cm}
For \( \bfv \in \FF_q^{n} \), let \( [\bfv] \) denote the \( (q \times n) \)-matrix representation of \( \bfv \), where \( [\bfv]_{i,j} = 1 \) if \( v_j = \delta_i \), and \( [\bfv]_{i,j} = 0 \) otherwise. 
\vspace{-0.2cm}

\begin{definition}[\hspace{-0.1ex}\cite{KV03}] 
    The score of a vector \( \bfv \in \FF_q^{n} \) with respect to a multiplicity matrix \( M \) is defined as the inner product $\cS_M(\bfv) := \left<M, [\bfv] \right>$.
\end{definition}
\vspace{-0.2cm}
With these definitions in place, we state the main performance guarantee of the Koetter--Vardy (KV) algorithm.
\vspace{-0.2cm}

\begin{theorem}[\hspace{-0.1ex}\cite{KV03}]
\label{thm:kv-alg}
Let $\cC$ be an $[n,k]_q$ RS code defined over evaluation points $\bm{\alpha} = (\alpha_1,\ldots,\alpha_n)$, and let $M$ be a $q \times n$ multiplicity matrix. Given $M$ as input, the KV algorithm outputs a list $\cL$ with the following properties:
\begin{enumerate}
    \item For any codeword \( \bfc \in \cC \), if \( \cS_M(\bfc) \geq \sqrt{2(k-1)C(M)} \), then $\bfc$ is included in the list \( \cL \). 
    \item The list size is bounded as $|\cL| \le \sqrt{\frac{2C(M)}{k-1}}$.
    \item The algorithm runs in time $\cO\!\left((C(M))^3\right)$.
\end{enumerate}
\end{theorem}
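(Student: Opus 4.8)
The plan is to follow the classical Guruswami--Sudan interpolation-and-factorization template, adapted to the weighted multiplicities encoded in $M$. First I would establish existence of a suitable interpolation polynomial: imposing on $Q(X,Y)$ a zero of multiplicity at least $m_{i,j}$ at $(\alpha_j,\delta_i)$ amounts to $\binom{m_{i,j}+1}{2}$ homogeneous linear constraints on the coefficients of $Q$ (written out via Hasse derivatives), for a total of $C(M)$ constraints. Since a bivariate polynomial of $(1,k-1)$-weighted degree at most $\Delta$ has strictly more than $\Delta^2/(2(k-1))$ monomials, choosing $\Delta$ minimal with $\Delta^2/(2(k-1)) \ge C(M)$, i.e. $\Delta \approx \sqrt{2(k-1)C(M)}$, forces the constraint system to be underdetermined, so a nonzero solution $Q_M$ exists with $\deg_{1,k-1} Q_M \le \Delta$.

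Next I would prove the root-counting inequality that underlies part~(1). Fix a codeword $\bfc = c_f$ with $\deg f < k$ and set $g(X) := Q_M(X,f(X))$; because $\deg_{1,k-1}Q_M \le \Delta$ and $\deg f \le k-1$, we have $\deg g \le \Delta$. A short lemma (again via Hasse derivatives) shows that at each evaluation point $\alpha_j$, the polynomial $g$ vanishes to order at least the multiplicity of $Q_M$ at the point $(\alpha_j, f(\alpha_j))$ lying on the curve $Y = f(X)$. Summing over $j$, the number of roots of $g$ counted with multiplicity is at least $\sum_j m_{i(j),j} = \langle M,[\bfc]\rangle = \cS_M(\bfc)$, where $\delta_{i(j)} = c_j$. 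Hence if $\cS_M(\bfc) > \Delta$ then $g \equiv 0$, so $(Y - f(X)) \mid Q_M(X,Y)$ and $f$ is recovered by the factorization step; since $\cS_M(\bfc) \ge \sqrt{2(k-1)C(M)} \ge \Delta$ (up to the rounding slack, handled by taking $\Delta$ at the boundary or treating the borderline case separately), part~(1) follows.

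Part~(2) is then immediate: distinct list codewords correspond to distinct factors $(Y - f(X))$ of $Q_M$, so $|\cL| \le \deg_Y Q_M$; and since the monomial $Y^{\deg_Y Q_M}$ has $(1,k-1)$-weighted degree $(k-1)\deg_Y Q_M \le \Delta$, we get $|\cL| \le \Delta/(k-1) \le \sqrt{2C(M)/(k-1)}$. For part~(3), the interpolation step solves a homogeneous linear system in roughly $C(M)$ unknowns with $C(M)$ constraints, costing $\cO(C(M)^3)$ by Gaussian elimination, while the root-finding/factorization step (e.g., the Roth--Ruckenstein procedure) is polynomial in the same parameters and not dominant, yielding the overall $\cO(C(M)^3)$ bound.

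I expect the main technical obstacle to be the bookkeeping around the weighted-degree threshold: pinning down $\Delta$ precisely enough that the clean hypothesis $\cS_M(\bfc) \ge \sqrt{2(k-1)C(M)}$ truly suffices, rather than a strict inequality or one with an additive slack. This requires a careful count of the monomials of bounded $(1,k-1)$-weighted degree and a clean handling of the floor/ceiling at the boundary. The multiplicity transfer lemma — that a zero of multiplicity $r$ of $Q$ at a point on $Y=f(X)$ forces $Q(X,f(X))$ to vanish to order $\ge r$ there — also needs a self-contained argument, but this is standard.
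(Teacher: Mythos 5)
This theorem is imported verbatim from Koetter--Vardy \cite{KV03}; the paper states it as a black box and gives no proof of its own, so there is nothing internal to compare your argument against. Judged on its own merits, your sketch is the standard (and correct) interpolation-and-factorization argument: the constraint count $\sum_{i,j}\binom{m_{i,j}+1}{2}=C(M)$, the monomial count $N_{1,k-1}(\Delta)>\Delta^2/(2(k-1))$ forcing a nonzero interpolant, the Hasse-derivative "multiplicity transfer" lemma giving $\deg$-many roots of $g(X)=Q_M(X,f(X))$ totalling at least $\cS_M(\bfc)$, and the conclusion $(Y-f(X))\mid Q_M$ whenever $\cS_M(\bfc)>\deg_{1,k-1}Q_M$. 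The list-size bound via $\deg_Y Q_M\le \deg_{1,k-1}Q_M/(k-1)$ and the $\cO(C(M)^3)$ cost of Gaussian elimination on a system with $\cO(C(M))$ unknowns are likewise the standard arguments.

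The one genuine gap is exactly the one you flag: the clean chain of inequalities gives membership under $\cS_M(\bfc)>\Delta$ with $\Delta=\lceil\sqrt{2(k-1)C(M)}\rceil$, which is an additive unit stronger than the stated hypothesis $\cS_M(\bfc)\ge\sqrt{2(k-1)C(M)}$. Closing it requires the sharper bookkeeping done in \cite{KV03}: define $\Delta_{1,k-1}(\nu)$ as the least $\delta$ with $N_{1,k-1}(\delta)>\nu$, use the exact formula $N_{1,k-1}(\delta)=(L+1)\bigl(\delta+1-\tfrac{(k-1)L}{2}\bigr)$ with $L=\lfloor\delta/(k-1)\rfloor$ rather than the crude lower bound, and verify that the minimal-degree interpolant satisfies $\deg_{1,k-1}Q_M<\cS_M(\bfc)$ already when $\cS_M(\bfc)\ge\sqrt{2(k-1)C(M)}$ (several sources, including the original, in fact state the condition with a strict inequality or an additive slack, so if you cannot close the boundary case you should state the theorem you can actually prove). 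This is a finite, mechanical fix, not a conceptual flaw; everything else in your outline is sound.
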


\section{Sequence Reconstruction over the $K$-Draw Discrete Memoryless Symmetric Channel}
We consider sequence reconstruction over a discrete memoryless symmetric (DMS)
substitution channel with multiple observations per transmission, referred to as
\emph{$K$-draw}. A codeword $\bfx \in \cC \subseteq \Sigma_q^n$ is transmitted over a $q$-ary symmetric channel, denoted by $\mathsf{DMS}_q(p)$, where each symbol is independently substituted with probability $p$ by a symbol chosen uniformly from the remaining $q-1$ symbols. The receiver observes $K$ independent noisy copies (or \emph{reads}) of the same
codeword, $Y := \{\{ \bfy_0, \bfy_1, \ldots, \bfy_{K-1}\}\}$,  where each $\bfy_i$ is obtained by passing $\bfx$ through an independent instance of $\mathsf{DMS}_q(p)$. We refer to this channel model as the \emph{$K$-draw discrete memoryless symmetric channel} and denote it by $\mathsf{DMS}_{q,K}(p)$. The objective is to efficiently reconstruct the transmitted codeword from the multiset of noisy reads.



The capacity of the $\mathsf{DMS}_{2,K}(p)$ channel was derived in \cite{M06}. In the following theorem, whose proof can be found in the appendix, we extend this result to the general case of arbitrary $q$, i.e., we compute the capacity of the $\mathsf{DMS}_{q,K}(p)$ channel.
We define the $q$-ary entropy function as
\begin{align*}
H_q(p)
:= \frac{
p \log(q-1) - p\log p - (1-p)\log(1-p)
}{\log q}.
\end{align*}

\begin{restatable}{theorem}{DMSqKpCapacity}\label{thm:DMSqKpCapacity}
For $p \in (0,1)$ and $q,K \geq 2$, the capacity of the
$\mathsf{DMS}_{q,K}(p)$ channel is
\begin{align*}
C_{\mathsf{DMS}_{q,K}(p)} = \frac{p^K}{q\log q\,(q-1)^K}
\Biggl[
q\Bigl(\tf{q-1}{p}\Bigr)^K
\Bigl(
K\log\!\Bigl(\tf{q-1}{p}\Bigr)
+ \log q
\Bigr)
- \Phi
\Biggr] - K H_q(p),
\end{align*}
where
\vspace{-0.3cm}
\begin{align*}
\Phi
= \sum_{\substack{\sum_{i=0}^{q-1} r_i = K\\ r_i \ge 0}}
\binom{K}{r_0,\ldots,r_{q-1}}
\Biggl(
\sum_{i=0}^{q-1}
\Bigl(\tf{1-p}{p}\Bigr)^{r_i}
(q-1)^{r_i}
\Biggr)
\log\!\Biggl(
\sum_{i=0}^{q-1}
\Bigl(\tf{1-p}{p}\Bigr)^{r_i}
(q-1)^{r_i}
\Biggr).
\end{align*}
Moreover, $\lim_{q\to\infty}
C_{\mathsf{DMS}_{q,K}(p)}
= 1 - p^K$.
\end{restatable}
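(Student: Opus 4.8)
The plan is to prove both claims by a direct mutual-information computation: obtain the closed form exactly, then extract the $q\to\infty$ limit from a concentration estimate on the output distribution. Model the single-symbol channel as the map sending an input $X\in\Sigma_q$ to its $K$ noisy copies $\bfY=(Y_1,\dots,Y_K)\in\Sigma_q^K$ with law $W(\bfy\mid x)=\prod_{t=1}^{K}W_1(y_t\mid x)$, where $W_1$ is the $q$-ary symmetric channel, and recall the capacity is $q$-ary normalized: $C_{\mathsf{DMS}_{q,K}(p)}=\tfrac1{\log q}\max_{P_X}I(X;\bfY)$ (the $\tfrac1{\log q}$ being the same normalization appearing in $H_q$). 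First I would reduce to a uniform input: since $W_1(\sigma(y)\mid\sigma(x))=W_1(y\mid x)$ for every permutation $\sigma$ of $\Sigma_q$, the channel is invariant under simultaneous relabelling of the input and all $K$ outputs; by concavity of $P_X\mapsto I(X;\bfY)$ and averaging any $P_X$ over $\sigma\in S_q$, the maximum is attained at the uniform distribution. I would also note that passing from $\bfY$ to the received multiset does not change the mutual information, since conditioned on the multiset the ordering is an exchangeable rearrangement whose law is independent of $X$. So it suffices to evaluate $I(X;\bfY)$ with $X$ uniform.

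For the exact formula, the conditional term is immediate: $H(\bfY\mid X)=KH(Y_1\mid X)=KH_q(p)\log q$, since $H(Y_1\mid X=x)=p\log(q-1)-p\log p-(1-p)\log(1-p)$. For $H(\bfY)$ I would group output tuples $\bfy$ by their \emph{type} $\bfr=(r_0,\dots,r_{q-1})$, $r_i=|\{t:y_t=\delta_i\}|$, $\sum_i r_i=K$; a one-line computation gives $P(\bfy)=c\,S(\bfr)$ with $c:=\tfrac1q\bigl(\tfrac{p}{q-1}\bigr)^{K}$ and $S(\bfr):=\sum_{i=0}^{q-1}\bigl(\tfrac{1-p}{p}\bigr)^{r_i}(q-1)^{r_i}$, and the number of tuples of type $\bfr$ is $\binom{K}{r_0,\dots,r_{q-1}}$ (abbreviated $\binom{K}{\bfr}$). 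Using $\sum_{\bfr}\binom{K}{\bfr}S(\bfr)=1/c=q\bigl(\tfrac{q-1}{p}\bigr)^{K}$ — checked via $\sum_{\bfr}\binom{K}{\bfr}\beta^{r_0}=(\beta+q-1)^{K}$ with $\beta=\tfrac{(1-p)(q-1)}{p}$ and $\beta+q-1=\tfrac{q-1}{p}$ — one gets $H(\bfY)=-\log c-c\,\Phi$ where $\Phi=\sum_{\bfr}\binom{K}{\bfr}S(\bfr)\log S(\bfr)$ is exactly the quantity in the statement. Substituting $-\log c=\log q+K\log\tfrac{q-1}{p}$, dividing by $\log q$, and grouping terms over $\tfrac{p^{K}}{q\log q(q-1)^{K}}$ then reproduces the stated closed form; this step is routine bookkeeping.

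For the limit, from the above $C=\tfrac{-\log c}{\log q}-\tfrac{c\Phi}{\log q}-KH_q(p)$, and elementarily $\tfrac{-\log c}{\log q}\to 1+K$ and $H_q(p)\to p$, so it remains to show $\tfrac{c\Phi}{\log q}\to K(1-p)+p^{K}$. The key observation is $c\,\Phi=\mathbb{E}\bigl[\log S(\bfY)\bigr]$ under the uniform-input output distribution, with $S(\bfY)=\sum_x\beta^{\,r_x(\bfY)}$. Let $M(\bfY):=\max_x r_x(\bfY)\in\{1,\dots,K\}$ be the largest symbol multiplicity among the $K$ reads. Bounding the sum crudely, $\beta^{M(\bfY)}\le S(\bfY)\le q+K\beta^{M(\bfY)}$, and since $\log\beta=\log q+O(1)$ this gives $\log S(\bfY)=M(\bfY)\log q+O(1)$ with the error bounded uniformly in $\bfY$ by a constant depending only on $(p,K)$; hence $\tfrac{c\Phi}{\log q}=\mathbb{E}[M(\bfY)]+O(1/\log q)$. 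Letting $B$ denote the number of reads equal to $X$, one has $B\sim\mathrm{Bin}(K,1-p)$ exactly, and except on the event that two erroneous reads coincide — of probability $O(1/q)$ — also $M(\bfY)=\max(B,1)$; since $M(\bfY)\le K$ this yields $\mathbb{E}[M(\bfY)]=\mathbb{E}[\max(B,1)]+O(1/q)=\mathbb{E}[B]+\Pr[B=0]+O(1/q)=K(1-p)+p^{K}+O(1/q)$. Combining, $C\to(1+K)-\bigl(K(1-p)+p^{K}\bigr)-Kp=1-p^{K}$.

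The exact formula is mechanical; the real content — and the main obstacle — is the $q\to\infty$ analysis of $\Phi$, a sum over exponentially many types $\bfr$ whose terms $S(\bfr)$ span many orders of magnitude. What makes it tractable is reading $c\Phi$ as $\mathbb{E}[\log S(\bfY)]$ and showing $\log S(\bfY)$ equals $M(\bfY)\log q$ up to a uniformly bounded error, which collapses the whole computation to the elementary identity $\mathbb{E}[\max(\mathrm{Bin}(K,1-p),1)]=K(1-p)+p^{K}$. The points requiring care are the symmetry/concavity reduction and the ordered-versus-multiset remark in the first step, and, in the limit, that both the $O(1/q)$ collision bound and the $O(1)$ bound on $\log S(\bfY)-M(\bfY)\log q$ are genuinely uniform over all outputs.
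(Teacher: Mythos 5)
Your derivation of the closed form follows the same route as the paper's appendix: reduce to the uniform input by channel symmetry, compute $H(\bfY\mid\bfX)=KH_q(p)\log q$ (you do this via $H(\bfY\mid\bfX)=KH(Y_1\mid\bfX)$ rather than the paper's binomial-sum manipulation, but it is the same quantity), and evaluate $H(\bfY)$ by grouping output tuples into types $\bfr$ and applying the multinomial identity $\sum_{\bfr}\binom{K}{\bfr}\beta^{r_0}=(\beta+q-1)^K$ with $\beta+q-1=\tfrac{q-1}{p}$. Your side remarks (concavity averaging over $S_q$ to justify the uniform maximizer, and the equivalence of the ordered tuple with the received multiset) are correct and tighten points the paper leaves implicit.

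The one substantive difference is the limit $\lim_{q\to\infty}C_{\mathsf{DMS}_{q,K}(p)}=1-p^K$: the paper's appendix derives only the closed form and never actually proves this claim, whereas you supply an argument. Your argument is correct: $c\Phi=\mathbb{E}[\log S(\bfY)]$, the sandwich $\beta^{M}\le S\le q+K\beta^{M}$ together with $\beta=\Theta(q)$ gives $\log S=M\log q+O(1)$ uniformly, and the collision event among erroneous reads has probability $O(1/q)$, so $\mathbb{E}[M]=\mathbb{E}[\max(B,1)]+O(1/q)=K(1-p)+p^K+O(1/q)$ with $B\sim\mathrm{Bin}(K,1-p)$; combining with $\tfrac{-\log c}{\log q}\to 1+K$ and $KH_q(p)\to Kp$ yields $1-p^K$. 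This is a genuine addition relative to what the paper writes down, and it is the part of the theorem that actually requires an idea (taming the sum over exponentially many types); the rest is bookkeeping that both you and the paper handle identically.
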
 

\vspace{-0.3cm}
The channel capacity provides a natural benchmark for evaluating the achievable reconstruction rates derived later and is plotted in Figure~\ref{fig:rateRegion-comp}. 

\vspace{-0.1cm}
\section{RS Codes over the $\mathsf{DMS}_{q,K}(p)$ Channel}
\vspace{-0.1cm}
In this subsection, we design an efficient reconstruction algorithm over the $\mathsf{DMS}_{q,K}(p)$ tailored to RS codes. We assume that $ 2 \leq K  \ll q$ and $p \in (0,1)$.  In Algorithm~\ref{alg:mult-mat}, we outline the procedure for constructing the multiplicity matrix from the received set of reads. Then, in Algorithm~\ref{alg:read-rec-K-reads}, we present our reconstruction algorithm. For $n \geq 1$ and $i \in [n]$, let $\bfe_i \in \{0, 1\}^n$ denote the standard basis vector, which has a $1$ at the $i$-th position and $0$ elsewhere.

\vspace{0.2cm}
\begin{algorithm}[H]
    \caption{Multiplicity matrix constructor}
    \label{alg:mult-mat}
    \SetAlgoLined
    \DontPrintSemicolon
    
    \SetKwInOut{Input}{input}
    \SetKwInOut{Output}{output}

    \Input{A set of $K$ reads $Y$ and an integer $\mu$}
    \Output{A multiplicity matrix $M\in \FF_q^{q\times n}$}
    Set $M = \textbf{0}_{q\times n}$ \;
    \For{$j \in [n]$}{
        \For{$\bfy \in Y$}{
            Set $i$ such that $\delta_i = y_j$ and update $(M)_{i,j} = (M)_{i,j} + \mu$
        }
    }
    \For{$i \in [n]$}{
    Let $m_i = \max\{(M)_{:,i}\}$ be the maximum multiplicity at the $i$-th position\\
    Let $V_i = \argmax\{(M)_{:,i}\}$ be the corresponding set of field elements with the maximum multiplicity\\
        \If{$m_i > 1$ and $|V_i| = 1$}{
            Let $v_i \in V_i$ and set $(M)_{:,i} = \mu K \bfe_{v_i}$ \;
        }
    }
    Return $M$ 
\end{algorithm}

\begin{algorithm}[H]
    \SetAlgoLined
    \DontPrintSemicolon
    \SetKwInOut{Input}{input}
    \SetKwInOut{Output}{output}

    \Input{A set of $K$ reads $Y$, and integer $\mu$}
    \Output{A codeword $\bfc$}
    Generate a multiplicity matrix $M$ using Algorithm~\ref{alg:mult-mat} with input $Y$ and an integer $\mu$\;
    Run the KV algorithm with the multiplicity matrix $M$ to produce a list of candidates $\cL$\;
    Let $\widehat{\bfc} = \underset{\bfc'
    \in \cL}{\argmax}~\cS_{M}(\bfc')$ \;
    \If{$\widehat{\bfc}$ is not unique}
    {
    Return FAILURE
    }
    \Else{Return the codeword $\widehat{\bfc}$}
    \caption{Reconstruction Algorithm}
    \label{alg:read-rec-K-reads}
\end{algorithm}

\vspace{0.3cm}
Next, we analyze the performance of Algorithm~\ref{alg:read-rec-K-reads}. 
Let $\bfX \in \mathbb{F}_q^n$ denote the random variable corresponding to the channel input (over $n$ uses of the channel). Let $\widehat{\bfc}$ be the output of Algorithm~\ref{alg:read-rec-K-reads}. Assuming a uniform transmission probability over the code $\cC$, the probability of error is:
\begin{align*}
    \Perr &= \sum_{\bfc\in\cC}P(\widehat{\bfc} \neq \bfc \mid \bfX = \bfc) P_\bfX(\bfX = \bfc)\\
    &=  \frac{\sum_{\bfc\in\cC}P(\widehat{\bfc} \neq \bfc \mid \bfX = \bfc)}{q^k}\;.
\end{align*}
The probability $P(\widehat{\bfc} \neq \bfc \mid \bfX = \bfc)$ for any arbitrary $\bfc \in \cC$ satisfies: 
\begin{align*}
    &P(\widehat{\bfc} \neq \bfc \mid \bfX = \bfc) \\
    &= P(\bfc \not\in\cL) + P( \bfc \in \cL) \cdot P\left(\bfc \neq \underset{\bfc'
    \in \cL}{\argmax}~\cS_{M}(\bfc') \right)\\
    &\leq  1 - P(\bfc \in\cL) + P\left(\bfc \neq \underset{\bfc' \in \cL}{\argmax}~\cS_{M}(\bfc') \right)\\
    &\overset{(*)}{\leq} 1 \hspace{-0.2ex}-\hspace{-0.2ex} P\big(\cS_M(\bfc) \hspace{-0.2ex}\geq\hspace{-0.2ex}\sqrt{2(k-1)C(M)}\big) \hspace{-0.2ex}+\hspace{-0.2ex} P\big(\bfc \hspace{-0.2ex}\neq\hspace{-0.2ex} \underset{\bfc'
    \in \cL}{\argmax}~\cS_{M}(\bfc') \big),
\end{align*}
where $(*)$ follows from Theorem~\ref{thm:kv-alg}.

Therefore, we get the following upper bound on the probability of error: 
\begin{align}\label{Ineq:Perr_upBound}
\Perr
&\le \frac{1}{q^k}
\sum_{\bfc\in\cC}
\Bigl[
1
- P\!\Bigl(
\cS_M(\bfc)
\ge \sqrt{2(k-1)C(M)}
\Bigr)
+ P\!\Bigl(
\bfc \neq
\argmax_{\bfc'\in\cL}
\cS_M(\bfc')
\Bigr)
\Bigr].
\end{align}

Consequently, we derive lower and upper bounds on 
\[P\left(\cS_M(\bfc) \geq \sqrt{2(k-1)C(M)}\right)~\text{and}\;\;P\left(\bfc \neq \underset{\bfc'\in \cL}{\argmax}~\cS_{M}(\bfc') \right),\] 
respectively, to bound the probability of error $\Perr$. 

To facilitate our analysis, we define the following auxiliary random variables that characterize the possible outcomes at position  \( i \in [n] \) as the multiplicity matrix is updated in Algorithm~\ref{alg:mult-mat}: 
\begin{enumerate}
    \item \(\Eras_{i,A}\):
    This variable corresponds to the scenario when the transmitted symbol
    \( c_i \) appears exactly once and the majority is also one.
    \vspace{-0.15cm}
    \begin{align*}
        \Eras_{i,A} := \mathbb{I}\{ m_i = 1 \} \cdot \mathbb{I}\{ M_{c_i,i} = 1 \}\;.
    \end{align*}
    \item \(\Eras_{i,B}\):
    This variable corresponds to the scenario when the transmitted symbol
    \( c_i \) does not appear at all and the majority is one.
    \vspace{-0.15cm}
    \[
    \Eras_{i,B} := \mathbb{I}\{ m_i = 1 \} \cdot \mathbb{I}\{ M_{c_i,i} = 0 \}\;.
    \]

    \item \(\Suc_i\):
    This variable corresponds to the scenario when there is a unique majority, and that majority is achieved by the transmitted symbol \( c_i \). Recall that $V_i = \argmax\{(M)_{:,i}\}$ and $v_i \in V_i$. 
    \vspace{-0.15cm}
    \[
    \Suc_i := \mathbb{I}\{ |V_i| = 1 \}
    \cdot \mathbb{I}\{ m_i > 1 \}
    \cdot \mathbb{I}\{ v_i = c_i \}\;.
    \]
    \item \(\Err_i\):
    This variable captures all remaining scenarios, including incorrect majority
    or ties:
    \vspace{-0.15cm}
    \[
    \Err_i := 1 - \Eras_{i,A} - \Eras_{i,B} - \Suc_i\;.
    \]
\end{enumerate}
Note that each of the above random vectors is independent and identically distributed (i.i.d.) across $[n]$. Hence, in what follows, we drop the subscript \( i \) and consider an arbitrary index \( i \in [n] \), unless explicitly stated otherwise. Furthermore, the random variables \( \Err \), \( \Eras_A \), \( \Eras_B \), and \( \Suc \) are mutually exclusive. In the following lemma, we compute the probability mass functions of these random variables.  

\begin{lemma}
The probabilities that the random variables \( \Eras_A \), \( \Eras_B \), and \( \Suc \) take the value \( 1 \) are: 
\begin{align*}
    P(\Eras_{A} = 1) &= K! \cdot (1-p) \cdot \binom{q-1}{K-1} \left( \frac{p}{q-1} \right)^{K-1}\hspace{-0.55cm}, \\
    P(\Eras_{B} = 1) &= K! \cdot\binom{q-1}{K} \left( \frac{p}{q-1} \right)^{K}\hspace{-0.2cm}, \\
    P(\Suc = 1) &= \sum_{j = 2}^{K} \binom{K}{j} (1-p)^j \left( \frac{p}{q-1} \right)^{K-j}\hspace{-0.45cm}\sum_{\substack{ \sum_{i=0}^{q-2} r_i = K-j \\ j > r_0, r_1, \ldots, r_{q-2} \geq 0 }} 
    \binom{K-j}{r_0, r_1, \ldots, r_{q-2}} .
\end{align*}
\end{lemma}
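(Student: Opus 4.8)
The key observation is that each of the four indicator random variables is determined entirely by the \emph{type} (composition) of the length-$K$ vector of received symbols at a fixed coordinate $i$, conditioned on $\bfX = \bfc$. Fix the coordinate $i$ and write $c = c_i$ for the transmitted symbol. For each of the $K$ reads, the channel outputs $c$ with probability $1-p$ and each of the other $q-1$ symbols with probability $p/(q-1)$; the $K$ reads are independent. So the distribution of the multiset of $K$ received symbols is a multinomial over the $q$ field elements with the single ``correct'' symbol $c$ having weight $1-p$ and all $q-1$ ``wrong'' symbols each having weight $p/(q-1)$. I would organize the computation by summing over the number $j$ of reads that equal $c$ and, among the remaining $K-j$ reads, over the composition $(r_0,\dots,r_{q-2})$ describing how the wrong reads are distributed among the $q-1$ wrong symbols.

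First I would handle $P(\Eras_A=1)$. The event $\{m_i=1\}\cap\{M_{c,i}=1\}$ requires exactly one read to equal $c$ (so $j=1$) and the other $K-1$ reads to all be \emph{distinct} wrong symbols (so that the maximum multiplicity is $1$ and is not beaten by any wrong symbol). The number of ordered outcomes realizing this is: choose which read is the correct one and which reads take which distinct wrong symbols; equivalently $K!$ ways to assign $K$ distinct slots, times $\binom{q-1}{K-1}$ choices of the $K-1$ wrong symbols used. Each such ordered outcome has probability $(1-p)\,(p/(q-1))^{K-1}$, which gives the stated formula. Next, $P(\Eras_B=1)$: here $j=0$, i.e.\ no read equals $c$, and all $K$ reads must be distinct wrong symbols so the max multiplicity is $1$. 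This contributes $K!\binom{q-1}{K}(p/(q-1))^K$ by the same counting. Finally, $P(\Suc=1)$: the unique majority symbol is $c$ and $m_i>1$, so $c$ appears $j\ge 2$ times and every wrong symbol appears strictly fewer than $j$ times; summing over $j$ from $2$ to $K$, choosing which $j$ of the $K$ reads equal $c$ via $\binom{K}{j}$, and then counting the compositions $(r_0,\dots,r_{q-2})$ of $K-j$ into $q-1$ parts each less than $j$ with the multinomial coefficient $\binom{K-j}{r_0,\dots,r_{q-2}}$ as the number of ordered arrangements, and multiplying by $(1-p)^j(p/(q-1))^{K-j}$, yields the third formula.

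The only genuinely delicate points are the tie-breaking conditions, so I would be careful there. For $\Eras_A$ and $\Eras_B$ the defining event is literally $m_i=1$, which forces \emph{all} received symbols to be pairwise distinct, so no ``strictly less than'' subtlety arises; the distinctness is exactly what produces the binomial-times-factorial count. For $\Suc$, the subtlety is the strict inequality $r_0,\dots,r_{q-2}<j$ in the inner sum, which encodes both $|V_i|=1$ (uniqueness of the majority) and $v_i=c$ (the majority is the correct symbol): if some wrong symbol also appeared $j$ times we would have a tie, and if it appeared more than $j$ times the majority would be wrong. I would also remark that the case $K-j=0$ (i.e.\ $j=K$, all reads correct) is included with the empty composition contributing a factor $1$, consistent with the convention in the stated sum. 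With these points addressed, the three probability formulas follow directly from the multinomial law, and I do not expect any further obstacle; the main care is simply in translating each event into the correct constraint on the composition of the received-symbol multiset.
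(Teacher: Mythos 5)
Your proposal is correct and follows essentially the same route as the paper: a direct multinomial count of the $K$ read outcomes at a fixed coordinate, with the distinctness constraint giving the $K!\binom{q-1}{\cdot}$ factors for $\Eras_A,\Eras_B$ and the composition constraint $r_0,\dots,r_{q-2}<j$ giving the inner sum for $\Suc$. If anything, your treatment of the events is slightly more careful than the paper's, which writes the joint probabilities as products of (non-independent) marginals before stating the same combinatorial formulas.
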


\begin{proof}
The random variable $\Eras_{A}$ takes value $1$ when the majority count is $1$ and the correct symbol appears only once at that position. Thus,
\begin{align*}
    P(\Eras_{A} = 1 ) &= P(m_i = 1)\cdot P(M_{c_i,i} = 1) \\
    &= K! (1-p) \binom{q-1}{K-1} \left(\frac{p}{q-1}\right)^{K-1}\;.
\end{align*}

Similarly, the random variable $\Eras_{B}$ takes value $1$ when the majority count is $1$ and all the reads contain incorrect symbols at that position. Thus, 
\begin{align*}
    P(\Eras_{B} = 1) &= P(m_i = 1) \cdot P(M_{c_i,i} = 0) \\
    &= K! \binom{q-1}{K} \left(\frac{p}{q-1}\right)^{K}\;.
\end{align*}

Considering cases where $K \geq j \geq 2$ symbols match $c_i$:

\begin{align*}
    P(\Suc = 1) & = P(|V_i| = 1) \cdot P(m_i > 1)\cdot P(v_i = c_i)\\
    &= \sum_{j = 2}^{K} P(|V_i| = 1) \cdot P(M_{v_i,i} = j)\cdot P(v_i = c_i) \\
    &= \sum_{j = 2}^{K} \binom{K}{j} (1-p)^j \left(\frac{p}{q-1}\right)^{K-j}\hspace{-0.45cm}\sum_{\substack{\sum_{i=0}^{q-2}r_i=K-j \\ j > r_0,r_1,\ldots, r_{q-2} \geq 0}} 
    \binom{K-j}{r_0,r_1,\ldots, r_{q-2}}\;.
\end{align*}
\end{proof}

In the next lemma and corollary, whose proofs can be found in the appendix, we derive upper and lower bounds on these probabilities.

\begin{restatable}{lemma}{RVProbBounds}\label{lem:rv_prob_bounds}
The probabilities of the random variables \( \Suc \), \( \Eras_{A} \), \( \Eras_{B} \), and \( \Err \) taking the value \( 1 \) satisfy:
\begin{align*}
    &P(\Suc = 1)  \in \left(\cL_S, \cU_S\right), \\ 
    &P(\Eras_{A} = 1)  \in \left(\cL_{E_A},\cU_{E_A}\right), \\
    &P(\Eras_{B} = 1)  \in \left(\cL_{E_B},\cU_{E_B}\right)\\
    &P(\Err = 1) \in\left(0, \cU_{E}\right).
\end{align*}
where
\begin{align*}
    \cL_S &:= \left(1 - \frac{p(K-1)}{q-1}\right) 
    - \left(\frac{p(q-K)}{q-1}\right)^{K} - K(1-p)\left(\frac{p(q-K)}{q-1}\right)^{K-1},~ 
    \cU_S := 1 - p^K - K(1-p)p^{K-1}\;\hspace{-0.4cm}, \\
    \cL_{E_A} &:= K(1-p)\left(\frac{p(q-K)}{q-1}\right)^{K-1},~
    \cU_{E_A}:= K(1-p)p^{K-1},\\
    \cL_{E_B}&:= \left(\frac{p(q-K)}{q-1}\right)^{K},~\cU_{E_B}:= p^K,\\
    \cU_{E}&:=\frac{p(K-1)}{q-1}\;.
\end{align*}
\end{restatable}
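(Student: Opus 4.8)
The plan is to treat the four bounds in turn, leveraging the exact mass functions of $\Eras_A$, $\Eras_B$ and $\Suc$ supplied by the previous lemma, so that the whole statement reduces to sandwiching three explicit finite sums. The two-sided bound on $\Err$ is then essentially algebraic: $\Err=1-\Eras_A-\Eras_B-\Suc$ holds pointwise, and the target constants are arranged so that $\cU_S+\cU_{E_A}+\cU_{E_B}=1$ and $\cL_S+\cL_{E_A}+\cL_{E_B}+\cU_E=1$; hence the bounds on $\Err$ follow once the six bounds for $\Suc,\Eras_A,\Eras_B$ are in place.

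For $\Eras_A$ and $\Eras_B$ I would first clear the binomial coefficients using $K!\binom{q-1}{K-1}=K\prod_{\ell=1}^{K-1}(q-\ell)$ and $K!\binom{q-1}{K}=\prod_{\ell=1}^{K}(q-\ell)$, which rewrites the exact formulas as $P(\Eras_A=1)=K(1-p)p^{K-1}\prod_{\ell=1}^{K-1}\tfrac{q-\ell}{q-1}$ and $P(\Eras_B=1)=p^{K}\prod_{\ell=1}^{K}\tfrac{q-\ell}{q-1}$. Each factor $\tfrac{q-\ell}{q-1}$ lies in $\bigl[\tfrac{q-K}{q-1},1\bigr]$ and is strictly below $1$ once $\ell\ge 2$, so bounding each product below by the appropriate power of $\tfrac{q-K}{q-1}$ and above by $1$ gives $\cL_{E_A}<P(\Eras_A=1)<\cU_{E_A}$ and $\cL_{E_B}<P(\Eras_B=1)<\cU_{E_B}$. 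This step is routine.

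The substantive work is the pair of bounds for $\Suc$. Here I would start from the exact expression and sandwich the inner constrained multinomial sum $\sum\binom{K-j}{r_0,\ldots,r_{q-2}}$, taken over $\sum_i r_i=K-j$ with every $r_i\le j-1$. From above it is at most the unconstrained sum $(q-1)^{K-j}$, the number of all maps from a $(K-j)$-set into a $(q-1)$-set; this cancels the factor $(q-1)^{-(K-j)}$ and, after the binomial identity $\sum_{j=0}^{K}\binom Kj(1-p)^jp^{K-j}=1$, yields $P(\Suc=1)\le\sum_{j=2}^{K}\binom Kj(1-p)^jp^{K-j}=1-p^{K}-K(1-p)p^{K-1}=\cU_S$. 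From below it is at least the number of injective maps $\prod_{\ell=1}^{K-j}(q-\ell)\ge(q-K)^{K-j}$, giving the termwise lower bound $\binom Kj(1-p)^j\bigl(\tfrac{p(q-K)}{q-1}\bigr)^{K-j}$; completing the sum over $j=0,\ldots,K$ to a binomial expansion with base $(1-p)+\tfrac{p(q-K)}{q-1}=1-\tfrac{p(K-1)}{q-1}$ and peeling off the $j=0$ and $j=1$ terms (which are exactly $\cL_{E_B}$ and $\cL_{E_A}$) gives $P(\Suc=1)\ge\bigl(1-\tfrac{p(K-1)}{q-1}\bigr)^{K}-\cL_{E_B}-\cL_{E_A}$, i.e.\ the asserted $\cL_S$ with the leading factor raised to the $K$-th power.

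Finally, $P(\Err=1)>0$ is immediate: the event that two fixed reads are both in error and coincide at position $i$ has positive probability and forces $\Err=1$, since then an incorrect symbol attains multiplicity at least $2$, excluding $\Eras_A$, $\Eras_B$ and $\Suc$ at once. For the upper bound, $P(\Err=1)=1-P(\Suc=1)-P(\Eras_A=1)-P(\Eras_B=1)\le 1-\bigl(1-\tfrac{p(K-1)}{q-1}\bigr)^{K}$ follows from the three lower bounds just obtained. I expect the hard part to be this last estimate, for two related reasons: the injective-map bound on the constrained multinomial sums is tight only for $j=2$ and lossy for $j\ge 3$, so forcing the clean binomial form via $\prod_{\ell=1}^{K-j}(q-\ell)\ge(q-K)^{K-j}$ is exactly where the hypothesis that $q$ is large must be used and where one has to check that the one-sided estimates still respect the stated inequalities; and the binomial quantity $1-(1-\tfrac{p(K-1)}{q-1})^{K}$ that emerges is weaker than the stated $\cU_E=\tfrac{p(K-1)}{q-1}$, so recovering the sharper first-order constant would call for either a finer estimate of the inner sums or a direct union bound $P(\Err=1)\le\binom K2\tfrac{p^2}{q-1}$ over pairs of reads, together with a reconciliation of the two.
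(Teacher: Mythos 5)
Your approach coincides with the paper's: you use the exact mass functions from the preceding lemma, sandwich the falling factorials $K!\binom{q-1}{K-1}$ and $K!\binom{q-1}{K}$ factor by factor between powers of $q-K$ and $q-1$ for $\Eras_{A},\Eras_{B}$, bound the inner constrained multinomial sum for $\Suc$ above by the unconstrained count $(q-1)^{K-j}$ and below by the injective count, and treat $\Err$ via total probability. The one place your conclusion deviates from the statement --- obtaining $\bigl(1-\tfrac{p(K-1)}{q-1}\bigr)^{K}$ where $\cL_S$ has the first power --- is not a defect of your argument but an error in the paper: the completed binomial sum $\sum_{j=0}^{K}\binom{K}{j}(1-p)^j\bigl(\tfrac{p(q-K)}{q-1}\bigr)^{K-j}$ equals $\bigl(1-\tfrac{p(K-1)}{q-1}\bigr)^{K}$, and the paper's proof silently drops the exponent. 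The stated $\cL_S$ is in fact false: for $K=2$ one has $P(\Suc=1)=(1-p)^2$ exactly, and a direct computation gives $\cL_S-(1-p)^2=\tfrac{p}{q-1}\bigl(1-\tfrac{p}{q-1}\bigr)>0$ for all $q\ge 3$ and $p\in(0,1)$. The stated $\cU_E=\tfrac{p(K-1)}{q-1}$, which the paper extracts from $1-\cL_S-\cL_{E_A}-\cL_{E_B}$, inherits the problem (it fails, e.g., at $K=3$, $p=0.9$, $q=100$, where $P(\Err=1)\approx 0.0244>0.0182$). Your corrected versions, $\cL_S=\bigl(1-\tfrac{p(K-1)}{q-1}\bigr)^{K}-\cL_{E_A}-\cL_{E_B}$ and $\cU_E=1-\bigl(1-\tfrac{p(K-1)}{q-1}\bigr)^{K}\le\epsilon_q p$, still support Corollary~\ref{cor:rv_prob_bounds_eps} and the asymptotic results after replacing $\epsilon_q p/K$ by $\epsilon_q p$; you were right to identify this as the sticking point, and no finer estimate of the inner sums will recover the stated constants.

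One minor slip: your positivity argument for $P(\Err=1)$ does not quite work for $K\ge 5$. Two coinciding erroneous reads exclude $\Eras_{A}$ and $\Eras_{B}$ but not $\Suc$, since the remaining $K-2\ge 3$ reads could all be correct, in which case the transmitted symbol is still the unique maximizer with multiplicity greater than $2$. Use instead the event that all $K$ reads are erroneous and identical, or simply observe that positivity already follows from the strict upper bounds on the other three probabilities together with $\cU_S+\cU_{E_A}+\cU_{E_B}=1$.
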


\begin{restatable}{corollary}{RVProbBoundsEps}\label{cor:rv_prob_bounds_eps}
Let \(\epsilon_q=\frac{K(K-1)}{q-1}\). Then, we have that:
\begin{align*}
    \cL_S &\geq 1 - p^K - K(1-p)p^{K-1} - \frac{\epsilon_q p}{K}, \\
     \cU_S &\leq 1 - p^K - K(1-p)p^{K-1}, \\
    \cL_{E_A} &\geq (1 - \epsilon_q) \cdot K(1-p)p^{K-1},~\cU_{E_A} \leq K(1-p)p^{K-1}, \\
\cL_{E_B} &\geq (1 - \epsilon_q) \cdot p^K,~\cU_{E_B} \leq p^K,~\cU_E \leq \frac{\epsilon_q p}{K}\;.
\end{align*}
\end{restatable}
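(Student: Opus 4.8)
The plan is to derive the corollary entirely from \Lref{lem:rv_prob_bounds} by three elementary simplifications, since the statement merely re-expresses the bounds of that lemma in terms of $\epsilon_q=\tfrac{K(K-1)}{q-1}$. First I would dispense with the upper bounds: $\cU_S$, $\cU_{E_A}$, and $\cU_{E_B}$ in the corollary are verbatim the upper bounds already proved in \Lref{lem:rv_prob_bounds}, so nothing is needed there; and expanding $\tfrac{\epsilon_q p}{K}=\tfrac{K(K-1)}{q-1}\cdot\tfrac{p}{K}=\tfrac{(K-1)p}{q-1}$ shows this quantity equals $\cU_E$ exactly, which settles the last bound. All remaining work is in the three lower bounds, and it relies on the single observation that $2\le K\ll q$ forces $\tfrac{q-K}{q-1}=1-\tfrac{K-1}{q-1}\in(0,1)$; I would state this once at the outset.

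For $\cL_{E_B}$ and $\cL_{E_A}$, the key step is to isolate that ratio and invoke Bernoulli's inequality $(1-x)^m\ge 1-mx$. Writing $\cL_{E_B}=p^{K}\bigl(1-\tfrac{K-1}{q-1}\bigr)^{K}$ and taking $x=\tfrac{K-1}{q-1}$, $m=K$ gives $\cL_{E_B}\ge p^{K}\bigl(1-\tfrac{K(K-1)}{q-1}\bigr)=(1-\epsilon_q)p^{K}$. Writing $\cL_{E_A}=K(1-p)p^{K-1}\bigl(1-\tfrac{K-1}{q-1}\bigr)^{K-1}$ and applying Bernoulli with $m=K-1$ produces a factor at least $1-\tfrac{(K-1)^2}{q-1}\ge 1-\tfrac{K(K-1)}{q-1}=1-\epsilon_q$, using $(K-1)^2\le K(K-1)$; hence $\cL_{E_A}\ge(1-\epsilon_q)\,K(1-p)p^{K-1}$.

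For $\cL_S$ I would substitute the explicit expression from \Lref{lem:rv_prob_bounds} and the identity $\tfrac{\epsilon_q p}{K}=\tfrac{(K-1)p}{q-1}$ into the target inequality $\cL_S\ge 1-p^{K}-K(1-p)p^{K-1}-\tfrac{\epsilon_q p}{K}$. The summand $1-\tfrac{(K-1)p}{q-1}$ then appears on both sides and cancels, reducing the claim to
\[
\Bigl(\tfrac{p(q-K)}{q-1}\Bigr)^{K}+K(1-p)\Bigl(\tfrac{p(q-K)}{q-1}\Bigr)^{K-1}\;\le\;p^{K}+K(1-p)p^{K-1},
\]
which holds because $\tfrac{q-K}{q-1}<1$ makes each left-hand term at most the corresponding right-hand term, the coefficients $1$ and $K(1-p)$ being nonnegative.

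There is no genuine obstacle here; the only point that requires care is the bookkeeping for $\cL_S$ — recognizing that $\tfrac{\epsilon_q p}{K}$ is precisely the linear term $\tfrac{(K-1)p}{q-1}$ already present in the formula for $\cL_S$, so that the apparently delicate comparison collapses to termwise monotonicity in $\tfrac{q-K}{q-1}$. Everything else is a direct appeal to Bernoulli's inequality and to $0<\tfrac{q-K}{q-1}<1$.
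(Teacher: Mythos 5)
Your proposal is correct and follows essentially the same route as the paper's proof: the upper bounds and $\cU_E$ are immediate identities, the lower bounds on $\cL_{E_A}$ and $\cL_{E_B}$ come from Bernoulli's inequality applied to $\bigl(1-\tfrac{K-1}{q-1}\bigr)^a$, and the bound on $\cL_S$ reduces to replacing $\tfrac{p(q-K)}{q-1}$ by $p$ termwise. No gaps.
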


Next, we define the following aggregate random variables, which count the number of positions corresponding to each outcome as the multiplicity matrix is updated in Algorithm~\ref{alg:mult-mat}: 
\begin{enumerate}
    \item $\NErr := \sum_{i=1}^n \Err_i$,
    \item $\NSuc := \sum_{i=1}^n \Suc_i$,
    \item $\NEras_A := \sum_{i=1}^n \Eras_{i,A}$,
    \item $\NEras_B := \sum_{i=1}^n \Eras_{i,B}$.
\end{enumerate}

In the following lemma, by applying concentration inequalities, we obtain a lower bound on the probability that all of the aggregate random variables simultaneously concentrate around their respective expectations.

\begin{lemma}
\label{lem:concentration_bounds}
Let $\epsilon_q = \frac{K(K-1)}{q-1}$. For $\eta \in (0,1)$, let $\delta_n = \sqrt{\frac{n \log(8/\eta)}{2}}$ and $\Psi_{\delta_n}$ denote the event that the following events hold simultaneously:
\begin{align*}
\NErr
&\in
\left(
0,\;
n\,\frac{\epsilon_q p}{K}
+ \delta_n
\right),
\\
\NSuc
&\in
\Bigl(
n\bigl(
1 - p^K - K(1-p)p^{K-1}
- \tfrac{\epsilon_q p}{K}
\bigr)
- \delta_n,
n\bigl(
1 - p^K - K(1-p)p^{K-1}
\bigr)
+ \delta_n
\Bigr),
\\
\NEras_A
&\in
\Bigl(
n(1-\epsilon_q)K(1-p)p^{K-1}
- \delta_n,
nK(1-p)p^{K-1}
+ \delta_n
\Bigr),
\\
\NEras_B
&\in
\Bigl(
n(1-\epsilon_q)p^K
- \delta_n,~
np^K + \delta_n
\Bigr)\;.
\end{align*}
Then, we have that 
 $P(\Psi_{\delta_n}) > 1-\eta\;$.
\end{lemma}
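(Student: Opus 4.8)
The plan is to observe that each of the four aggregate random variables is a sum of $n$ i.i.d.\ $\{0,1\}$-valued indicators — this was noted just above the lemma, since the vectors $(\Err_i,\Eras_{i,A},\Eras_{i,B},\Suc_i)$ are i.i.d.\ across $i\in[n]$ — and then to apply a two-sided Hoeffding inequality to each sum, followed by a union bound over the four "bad" events. The relevant expectations are $\mathbb{E}[\NErr]=nP(\Err=1)$, $\mathbb{E}[\NSuc]=nP(\Suc=1)$, $\mathbb{E}[\NEras_A]=nP(\Eras_A=1)$ and $\mathbb{E}[\NEras_B]=nP(\Eras_B=1)$, each of which was already sandwiched between explicit bounds in Corollary~\ref{cor:rv_prob_bounds_eps}.

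Concretely, with $\delta_n=\sqrt{n\log(8/\eta)/2}$, Hoeffding's inequality for a sum $S$ of $n$ i.i.d.\ $[0,1]$-valued variables gives $P(|S-\mathbb{E}[S]|\ge\delta_n)\le 2\exp(-2\delta_n^2/n)=2\exp(-\log(8/\eta))=\eta/4$. Applying this to $S=\NSuc$: on the complement event (probability at least $1-\eta/4$) we have $|\NSuc-\mathbb{E}[\NSuc]|<\delta_n$, and since Corollary~\ref{cor:rv_prob_bounds_eps} gives $\mathbb{E}[\NSuc]\in\bigl[n(1-p^K-K(1-p)p^{K-1}-\tfrac{\epsilon_q p}{K}),\,n(1-p^K-K(1-p)p^{K-1})\bigr]$, the variable $\NSuc$ then lies in the interval claimed in the lemma. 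The identical argument applied to $\NEras_A$ and $\NEras_B$ places each of them, with probability at least $1-\eta/4$, inside its stated interval, using $\mathbb{E}[\NEras_A]\in[n(1-\epsilon_q)K(1-p)p^{K-1},\,nK(1-p)p^{K-1}]$ and $\mathbb{E}[\NEras_B]\in[n(1-\epsilon_q)p^K,\,np^K]$. For $\NErr$, the upper endpoint $n\tfrac{\epsilon_q p}{K}+\delta_n$ follows the same way from $\mathbb{E}[\NErr]\le n\tfrac{\epsilon_q p}{K}$, while the lower endpoint $0$ is automatic because $\NErr\ge 0$ deterministically.

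A union bound over the four bad events then gives $P(\Psi_{\delta_n}^{\,c})\le 4\cdot\eta/4=\eta$, i.e.\ $P(\Psi_{\delta_n})\ge 1-\eta$; the strict inequality in the statement follows from the slack available (the intervals are open and, by Lemma~\ref{lem:rv_prob_bounds}, the expectations sit strictly inside the Corollary's bounds), so it is a minor point rather than a genuine difficulty. In fact there is no substantial obstacle here at all — the only thing requiring care is bookkeeping: the constant $8=2\times 4$ inside the logarithm defining $\delta_n$ is precisely what makes the Hoeffding factor $2$ together with the four-way union bound consume exactly the budget $\eta$, and one must invoke the \emph{two-sided} bounds of Corollary~\ref{cor:rv_prob_bounds_eps} (not merely one-sided ones) so that the concentration intervals line up with the asymmetric endpoints in the statement.
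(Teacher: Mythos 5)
Your proof is correct and follows essentially the same route as the paper: Hoeffding's inequality applied to each of the four i.i.d.\ Bernoulli sums with $\delta_n$ chosen so each deviation event has probability at most $\eta/4$, the expectation sandwiches from Corollary~\ref{cor:rv_prob_bounds_eps}, and a union bound over the four events. Your additional remark that the lower endpoint $0$ for $\NErr$ holds deterministically is a correct (and slightly more careful) observation than the paper makes explicit.
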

\begin{proof}
Note that the random variables $\{\Err_i\}, \{\Suc_i\}, \{\Eras_{i,A}\}, \{\Eras_{i,B}\}$ are i.i.d. Bernoulli random variables. Hence, the aggregate variables $\NErr$, $\NSuc$, $\NEras$, $\NEras_A$, and $\NEras_B$ are sums of $n$ independent bounded random variables in $[0,1]$.

By Hoeffding’s inequality, for a sum $X = \sum_{i=1}^n X_i$ of independent variables $X_i \in [0,1]$:
\[
P\left( \left| X - \mathbb{E}[X] \right| \geq \delta \right) \leq 2 \exp\left( -\frac{2 \delta^2}{n} \right).
\]
Setting
\[
\delta = \delta_n := \sqrt{\frac{n \log(8/\eta)}{2}},
\]
we obtain:
\[
P\left( \left| X - \mathbb{E}[X] \right| \geq \delta_n \right) \leq \frac{\eta}{4},
\]
or equivalently,
\begin{align}\label{ineq:hoeff_sum}
P\left(X \in \left[\mathbb{E}[X] - \delta_n, \mathbb{E}[X] + \delta_n\right]\right) \geq 1 - \frac{\eta}{4}.
\end{align}

We now compute the expectations and apply the inequality to each variable. From Corollary~\ref{cor:rv_prob_bounds_eps}, we have:
\begin{align*}
\mathbb{E}[\NErr] &< n \cdot \frac{\epsilon_q p}{K}\;,\\
\mathbb{E}[\NSuc] &\in \left[
n \left(1 - p^K - K(1-p)p^{K-1} - \frac{\epsilon_q p}{K} \right),
n \left(1 - p^K - K(1-p)p^{K-1} \right)
\right]\;,\\
\mathbb{E}[\NEras_A] &\in \left[
n(1 - \epsilon_q)K(1-p)p^{K-1},
nK(1-p)p^{K-1}
\right]\;,\\
\mathbb{E}[\NEras_B] &\in \left[
n(1 - \epsilon_q)p^K,
np^K
\right],
\end{align*}
Thus, with probability at least $1 - \frac{\eta}{4}$, each of the following inequalities hold:
\begin{align*}
\NErr &\in \left(0, n \cdot \frac{\epsilon_q p}{K} + \delta_n\right)\;,\\
\NSuc &\in \left(
n \left(1 - p^K - K(1-p)p^{K-1} - \frac{\epsilon_q p}{K} \right) - \delta_n,
n \left(1 - p^K - K(1-p)p^{K-1} \right) + \delta_n\right)\;,\\
\NEras_A &\in \left(n(1 - \epsilon_q)K(1-p)p^{K-1} - \delta_n,
nK(1-p)p^{K-1} + \delta_n
\right)\;,\\
\NEras_B &\in \left(
n(1 - \epsilon_q)p^K - \delta_n,
np^K + \delta_n
\right)\;.
\end{align*}
Using the union bound over the four events, the probability that all four events hold simultaneously is at least $1- 4 \cdot \frac{\eta}{4} = 1- \eta$, as desired.  
\end{proof}

Therefore, by Lemma~\ref{lem:concentration_bounds}, the random variables
$\NEras_A$ and $\NEras_B$ concentrate with high probability around
$nK(1-p)p^{K-1}$ and $np^K$, respectively.
Since a hard-decision majority-vote decoder declares these coordinates as erasures,
we obtain the following corollary.

\begin{corollary}\label{cor:majority}
Let $p \in (0,1)$ and $K \ge 2$.
Then, for sufficiently large blocklength, the transmitted codeword can be reconstructed with arbitrarily small probability of error using the hard-decision majority-vote decoder whenever

\vspace{-1.5ex}
\begin{small}
\[
R \le 1 - p^K - K(1-p)p^{K-1}.
\]
\end{small}
\end{corollary}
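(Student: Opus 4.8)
The plan is to realize the hard-decision majority-vote decoder as a two-stage procedure and then feed the deterministic decoding criterion into the concentration bound of Lemma~\ref{lem:concentration_bounds}. Stage one performs a symbol-wise majority vote on the $K$ reads, producing a word over $\F_q\cup\{?\}$: on the event $\Suc_i$ the transmitted symbol $c_i$ is written; on the events $\Eras_{i,A}$ and $\Eras_{i,B}$ (all reads distinct) and on the tie cases a ``$?$'' is written; and on the remaining part of $\Err_i$ (a confident wrong majority) some incorrect symbol may be written. Stage two runs the classical Berlekamp--Welch error-and-erasure decoder for the $[n,k]_q$ RS code on the resulting word. Partitioning the $n$ coordinates, a $\Suc_i$ coordinate contributes $0$ to the quantity $2\cdot(\#\text{errors})+(\#\text{erasures})$, an $\Eras_{i,A}$ or $\Eras_{i,B}$ coordinate contributes $1$, and an $\Err_i$ coordinate contributes at most $2$; hence $2\cdot(\#\text{errors})+(\#\text{erasures})\le 2\,\NErr+\NEras_A+\NEras_B$. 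Since the RS code is MDS with minimum distance $n-k+1$, Berlekamp--Welch runs in polynomial time and returns $\bfc$ uniquely whenever $2\,\NErr+\NEras_A+\NEras_B\le n-k$.

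Next I would fix $\eta\in(0,1)$, set $\delta_n=\sqrt{n\log(8/\eta)/2}$, and condition on the event $\Psi_{\delta_n}$ of Lemma~\ref{lem:concentration_bounds}, which occurs with probability greater than $1-\eta$. On $\Psi_{\delta_n}$ the upper bounds supplied by that lemma give
\[
2\,\NErr + \NEras_A + \NEras_B
< n\Bigl(\tfrac{2\epsilon_q p}{K} + K(1-p)p^{K-1} + p^K\Bigr) + 4\delta_n .
\]
Writing $k=\lfloor Rn \rfloor$, so that $n-k\ge n(1-R)$, it remains only to verify that the right-hand side is at most $n(1-R)$ for all sufficiently large $n$, which would force $2\,\NErr+\NEras_A+\NEras_B< n-k$ on $\Psi_{\delta_n}$ and hence guarantee successful reconstruction there.

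For a fixed rate $R$ strictly below the threshold, let $\beta:=1-R-p^K-K(1-p)p^{K-1}>0$; the displayed inequality shows it suffices to have $\tfrac{2\epsilon_q p}{K}+\tfrac{4\delta_n}{n}\le\beta$. Since RS codes require $q\ge n$, we get $\epsilon_q=\tfrac{K(K-1)}{q-1}\le \tfrac{K(K-1)}{n-1}\to 0$, while $\tfrac{\delta_n}{n}=\sqrt{\tfrac{\log(8/\eta)}{2n}}\to 0$, so the inequality holds for every $n$ beyond some $n_0=n_0(p,K,R,\eta)$. Because the per-coordinate outcome probabilities are invariant under relabeling the alphabet and therefore do not depend on the transmitted codeword, the failure probability is the same for every $\bfc\in\cC$, so $\Perr\le P(\overline{\Psi_{\delta_n}})<\eta$; as $\eta$ was arbitrary, the corollary follows, the stated threshold being understood as the supremum of achievable rates (the argument covers every $R$ below it).

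As for difficulty, essentially all of the probabilistic content is already packaged in Lemma~\ref{lem:concentration_bounds} and Corollary~\ref{cor:rv_prob_bounds_eps}, so what remains is bookkeeping. The only step that warrants care is the reduction in the first paragraph: justifying that charging every $\Err_i$ coordinate as a full-weight error (rather than a half-weight erasure) is a legitimate worst case, and confirming that the $O(\sqrt{n})$ fluctuation term $\delta_n$ and the $O(1/q)$ term $\epsilon_q p/K$ are genuinely negligible against the linear slack $\beta n$. I would also make the quantifier order explicit — choose $\eta$, then $\delta_n$, then $n_0$ — since $\delta_n$ depends on $\eta$.
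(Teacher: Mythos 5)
Your proof is correct and follows essentially the same route the paper intends: concentration of the per-coordinate outcome counts (Lemma~\ref{lem:concentration_bounds}) combined with the MDS error-and-erasure guarantee $2E+S\le n-k$, with $\epsilon_q\to 0$ because $q\ge n$ for RS codes. In fact you are more careful than the paper's one-sentence justification, which accounts only for $\NEras_A+\NEras_B$ and silently drops the (asymptotically negligible) $2\,\NErr$ term that you handle explicitly.
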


Next, we analyze Algorithm~\ref{alg:read-rec-K-reads}, which applies the Koetter–Vardy (KV) soft-decoding algorithm using the multiplicity matrix produced by Algorithm~\ref{alg:mult-mat}. In the following theorem, we quantify the contribution of the aggregate random variables in order to bound both the cost of the resulting multiplicity matrix and the score of candidate codewords. These bounds are then used to derive sufficient conditions on the parameters $(n,k,q,K,\mu,p)$ under which the probability of error $\Perr$ can be made arbitrarily small.

\begin{theorem}\label{thm:Perr_UpBound}
Let $K \geq 2$, $p \in (0,1)$ and $\mu \in \mathbb{N}$. Define:
\begin{align*}
\epsilon_q
&:= \frac{K(K-1)}{q-1},~
\delta_n := \sqrt{\frac{n\log(8/\eta)}{2}}, \alpha:= p^K + K(1-p)p^{K-1},
\\
O_n^*
&:= \mu\Bigl(
n\bigl(
\epsilon_q p + \alpha
\bigr)
+ (k-1)K(1-\alpha)
\nonumber
+ \delta_n\Bigl(
K + 2 + \tfrac{(k-1)K}{n}
\Bigr)
\Bigr),\\
S_n^*
&:= \mu\Bigl(
n\Bigl(
K(1-\alpha)
- \epsilon_q p
+ K(1-\epsilon_q)(1-p)p^{K-1}
\Bigr) 
\hspace{-0.3ex}- \hspace{-0.3ex}(K+1)\delta_n
\Bigr),\\
C_n^*
&:= \frac{\mu K}{2}
\Bigl(
n\Bigl(
(\mu K-1)\bigl(1-(1-\epsilon_q)\alpha\bigr)
+ (\mu-1)\alpha
\Bigr)
+ 2\delta_n\bigl(\mu(K+1)-2\bigr)
\Bigr)\;.
\end{align*}
Then, for any arbitrary small $\eta \in (0,1)$, 
\begin{align*}
    \Perr \leq 2\eta\;,
\end{align*}
provided that $S_n^* \geq \max\left\{\sqrt{2(k-1)C_n^*}, O_n^*\right\}$.
\end{theorem}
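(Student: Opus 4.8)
The plan is to combine the error bound \eqref{Ineq:Perr_upBound} with the concentration event $\Psi_{\delta_n}$ of Lemma~\ref{lem:concentration_bounds} and the performance guarantees of the KV algorithm (Theorem~\ref{thm:kv-alg}). Fix an arbitrary transmitted codeword $\bfc \in \cC$. Conditioned on $\Psi_{\delta_n}$, I will show that the multiplicity matrix $M$ produced by Algorithm~\ref{alg:mult-mat} deterministically satisfies $C(M) \le C_n^*$ and $\cS_M(\bfc) \ge S_n^*$; I will also show that, conditioned on $\Psi_{\delta_n}$, any competing codeword $\bfc' \ne \bfc$ has score at most $O_n^*$. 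Granting these three facts, the hypothesis $S_n^* \ge \max\{\sqrt{2(k-1)C_n^*},\,O_n^*\}$ forces $\cS_M(\bfc) \ge \sqrt{2(k-1)C(M)}$ (so $\bfc \in \cL$ by part~1 of Theorem~\ref{thm:kv-alg}) and $\cS_M(\bfc) > \cS_M(\bfc')$ for all $\bfc' \in \cL \setminus \{\bfc\}$ (so $\bfc$ is the unique maximizer and Algorithm~\ref{alg:read-rec-K-reads} returns it). Hence on $\Psi_{\delta_n}$ the algorithm succeeds, so $P(\widehat\bfc \ne \bfc \mid \bfX = \bfc) \le P(\overline{\Psi_{\delta_n}}) < \eta$ by Lemma~\ref{lem:concentration_bounds}. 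Plugging this into \eqref{Ineq:Perr_upBound} — where each summand is at most $1 - P(\bfc \in \cL) + P(\bfc \ne \argmax) \le 2\eta$ on the good event (the two "bad" probabilities each being $\le \eta$) — and averaging over $\bfc \in \cC$ gives $\Perr \le 2\eta$.

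The heart of the argument is the three deterministic estimates on the event $\Psi_{\delta_n}$. For the \emph{cost} bound: after the cleanup loop in Algorithm~\ref{alg:mult-mat}, each of the $\NSuc + \NErr'$ "cleaned" columns (those with $m_i > 1$ and $|V_i| = 1$) contributes exactly $\binom{\mu K + 1}{2} = \tfrac{\mu K(\mu K - 1)}{2}$ to $C(M)$ via Definition~\ref{def:cost}, while each uncleaned column (the $\Eras_A, \Eras_B$ positions, plus the tie positions counted in $\Err$) contributes at most $K\binom{\mu+1}{2} = \tfrac{\mu K(\mu-1)}{2}$ since its $K$ entries are each $\le \mu$; bounding the number of cleaned columns above by $n(1-(1-\epsilon_q)\alpha)$ via the $\NEras_A, \NEras_B$ lower bounds (note $1-(1-\epsilon_q)\alpha \ge$ the true success-plus-clean-error fraction after accounting for $\delta_n$ slack) and the uncleaned columns above by $n\alpha$ plus the $\delta_n$ terms yields $C_n^*$. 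For the \emph{score} $\cS_M(\bfc) = \langle M, [\bfc]\rangle$: on $\Suc$ columns $\bfc$ picks up the full $\mu K$; on $\Eras_A$ columns it picks up $\mu$ (the single correct read survives); on $\Eras_B$ columns $0$; on $\Err$ columns $\ge 0$. Summing using the $\NSuc, \NEras_A$ lower bounds from $\Psi_{\delta_n}$ and collecting the $\delta_n$ error terms gives $S_n^*$. For the \emph{competitor} bound $\cS_M(\bfc') \le O_n^*$: write $\cS_M(\bfc') = \sum_i M_{c'_i, i}$; on each cleaned column $\bfc'$ can get at most $\mu K$ (and only when $c'_i = v_i$, which since $\bfc' \ne \bfc$ and RS codes have distance $n-k+1$ happens on at most... ) — more simply, since $\bfc$ and $\bfc'$ agree on at most $k-1$ coordinates, $\bfc'$ can collect the full $\mu K$ on at most $k-1$ of the cleaned columns and at most $\mu(K-1) + \mu = \mu K$... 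I would instead bound: on cleaned columns where $c'_i \ne v_i$ the contribution is $\le \mu(K-1)$ (one read "stolen" by $v_i$ at worst leaves $K-1$), on at most $k-1$ cleaned columns it is $\le \mu K$, and on uncleaned columns $\le \mu K$ total across the $\le n\alpha + O(\delta_n)$ of them — then $\cS_M(\bfc') \le \mu(k-1)K + \mu(K-1)\cdot(\text{cleaned count}) + \mu K \cdot (\text{uncleaned count})$, and substituting the $\Psi_{\delta_n}$ bounds gives $O_n^*$.

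The main obstacle will be the competitor-score bound $O_n^*$: unlike the cost and the correct-codeword score, which are genuinely column-local, bounding $\cS_M(\bfc')$ must use the global RS distance property $|\{i : c_i = c'_i\}| \le k-1$ to limit how many of the high-weight ($\mu K$) cleaned columns a wrong codeword can exploit, and one must be careful that on a cleaned column with $v_i \ne c'_i$ the matrix entry $M_{c'_i,i}$ is genuinely only $\le \mu(K-1)$ rather than $\mu K$ — this is where the structure of the cleanup step (overwriting the column by $\mu K \bfe_{v_i}$, zeroing out everything else) is used, giving $M_{c'_i,i} = 0$ on cleaned columns with $c'_i \ne v_i$, which actually makes the bound cleaner than stated above. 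Reconciling the exact coefficient of $(k-1)K$ and of the $\delta_n/n$ term in $O_n^*$ with this accounting is the one place requiring genuine care; everything else is bookkeeping with the inequalities of Corollary~\ref{cor:rv_prob_bounds_eps} and Lemma~\ref{lem:concentration_bounds}.
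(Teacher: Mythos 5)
Your proposal follows the paper's proof of Theorem~\ref{thm:Perr_UpBound} essentially step for step: the same decomposition via Inequality~(\ref{Ineq:Perr_upBound}), conditioning on the concentration event $\Psi_{\delta_n}$ of Lemma~\ref{lem:concentration_bounds}, the same three deterministic estimates on that event --- $C(M)\le C_n^*$ from counting cleaned versus uncleaned columns, $\cS_M(\bfc)\ge \NSuc\cdot\mu K+\NEras_A\cdot\mu\ge S_n^*$, and an upper bound on every competitor's score --- and the same $\eta+\eta=2\eta$ bookkeeping at the end. The cost and correct-codeword-score estimates match the paper exactly (modulo the false identity $\binom{\mu K+1}{2}=\tfrac{\mu K(\mu K-1)}{2}$; the right-hand side is $\binom{\mu K}{2}$, which is in fact the per-column quantity that the paper's proof and the definition of $C_n^*$ actually use).

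The one substantive divergence is the competitor bound, and you correctly flagged it as the delicate point. The paper bounds $\max_{\bfc'\in\cL\setminus\{\bfc\}}\cS_M(\bfc')$ by $\NErr\cdot\mu K+\tfrac{k-1}{n}\NSuc\cdot\mu K+(\NEras_A+\NEras_B)\cdot\mu$, i.e., it charges only a $\NSuc/n$ \emph{fraction} of the at most $k-1$ agreement coordinates to success columns; this is precisely what produces the $(k-1)K(1-\alpha)$ term in $O_n^*$. Your accounting --- all $k-1$ agreements may fall on cleaned success columns, each worth the full $\mu K$, while $M_{c'_i,i}=0$ on cleaned columns with $c'_i\ne v_i$ --- yields the genuine worst case $(k-1)\mu K+\NErr\cdot\mu K+(\NEras_A+\NEras_B)\cdot\mu$, which is \emph{larger} than the paper's bound. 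Consequently your argument establishes the theorem with $(k-1)K$ in place of $(k-1)K(1-\alpha)$ in $O_n^*$, and it cannot be tightened to the stated constant without the paper's proportionality step, which as a worst-case claim is not justified in the paper either (nothing prevents all $k-1$ agreement positions from being success positions). So this is a discrepancy between your (rigorous) bound and the stated $O_n^*$, not a flaw in your reasoning; apart from this constant, your proof is the paper's proof.
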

\begin{proof} 
The cost of the multiplicity matrix $M$ satisfies
\begin{small}
\begin{align*}
    C(M) &\leq (n-\NEras_A-\NEras_B) \cdot\binom{\mu K}{2} + \left(\NEras_A+\NEras_B\right) \cdot K\binom{\mu}{2}\;.
\end{align*}
\end{small}
The score of the transmitted codeword $\bfc$ with respect to $M$ is lower bounded as
\begin{small}
\begin{align*}
    \cS_M(\bfc) \geq \NSuc\cdot\mu K + \NEras_A \cdot \mu\;. 
\end{align*}
\end{small}
Consequently, 
\begin{align*}
&P\left( \cS_M(\bfc) \geq \sqrt{2(k-1)C(M)} \right) \\
&\quad> P\left( \mathcal{S}(\bfc) \geq \sqrt{2(k-1)C(M)}~\bigg\vert~\Psi_{\delta_n}\right)\cdot P\left(\Psi_{\delta_n}\right)\\
&\quad\overset{(1)}{>} \mathbb{I}\left\{ S_n^* \geq  \sqrt{2(k-1)C_n^*} \right\} \cdot P\left(\Psi_{\delta_n}\right)\\
&\quad> \mathbb{I}\left\{ S_n^* \geq  \sqrt{2(k-1)C_n^*} \right\} \cdot (1-\eta)\;,
\end{align*}
where $(1)$ follows since, conditioned on $\Psi_{\delta_n}$,
$\cS_M(\bfc) \ge S_n^*$ and $C(M) \le C_n^*$. 

Next, the maximum score among all but the transmitted codeword in the list generated by Algorithm~\ref{alg:read-rec-K-reads} is at most 
\begin{align*}
\underset{\bfc'
    \in \cL\backslash\{\bfc\}}{\max}~\cS_{M}(\bfc') &\leq \NErr\cdot\mu K + \frac{k-1}{n}\NSuc\cdot\mu K + (\NEras_A + \NEras_B) \cdot \mu\;.
\end{align*}
Then, 
\begin{align*}
    &P\left(\bfc \neq \underset{\bfc'
    \in \cL}{\argmax}~\cS_{M}(\bfc')\right)  \\
    &< P\left(\bfc \neq \underset{\bfc'
    \in \cL}{\argmax}~\cS_{M}(\bfc')~\bigg\vert~\Psi_{\delta_n}\right)\cdot P\left(\Psi_{\delta_n}\right) + 1 - P\left(\Psi_{\delta_n}\right) \\
    &< P\left( \underset{\bfc'
    \in \cL\backslash\{\bfc\}}{\max}~\cS_{M}(\bfc') > \mathcal{S}_M(\bfc)~\bigg\vert~\Psi_{\delta_n}\right) + \eta \\
    &\overset{(2)}{<} \mathbb{I}\left\{ O_n^* > S_n^* \right\} + \eta\;,
\end{align*}
where $(2)$ follows since, conditioned on $\Psi_{\delta_n}$,
$\cS_M(\bfc) \ge S_n^*$ and
$\max_{\bfc'\in\cL\setminus\{\bfc\}} \cS_M(\bfc') \le O_n^*$.

Finally, since $S_n^* \geq \max\left\{\sqrt{2(k-1)C_n^*}, O_n^*\right\}$, from Inequality~(\ref{Ineq:Perr_upBound}), it follows that:
\begin{align*}
\Perr
&\le
\frac{1}{q^k}
\sum_{\bfc\in\cC}
\Bigl[
1
- P\!\Bigl(
\cS(\bfc)
\ge \sqrt{2(k-1)C(M)}
\Bigr) 
+ P\!\Bigl(
\bfc \neq
\argmax_{\bfc'\in\cL}
\cS_M(\bfc')
\Bigr)
\Bigr]
\le 2\eta,
\end{align*}
as desired.
\end{proof}

A direct simplification of Theorem~\ref{thm:Perr_UpBound} leads to the following corollary, which characterizes the asymptotic behavior of the quantities appearing in Theorem~\ref{thm:Perr_UpBound} and derives a threshold on the coding rate.

\begin{corollary}\label{cor:asymp_values}
Let $p, \eta\in(0,1)$ and $K \geq 2$ and $\mu\sim\cO(1)$. Let $\alpha := p^K + K(1-p)\,p^{K-1}$. Then, 
\begin{align*}
O_n^* 
&\sim 
(\mu 
\alpha) n +  \cO\left(\sqrt{n}\right)\\
S_n^* 
&\sim 
\mu
K \Bigl(1-p^K - (K-1)(1-p)p^{K-1}
\Bigr)n - \cO\left(\sqrt{n}\right)\;, \\
C_n^* &\sim 
\frac{\mu\,K}{2}\;
\Bigl((\mu K -1)(1 -\alpha) + (\mu-1)\,\alpha
\Bigr) n +  \cO\left(\sqrt{n}\right)
\;.
\end{align*}
Furthermore, if 
\begin{align*}
R &\le
\frac{
K\bigl(1 - p^K - (K-1)(1-p)p^{K-1}\bigr)^2
}{
1 + (K-1)\bigl(1 - p^K - K(1-p)p^{K-1}\bigr)
}, \\
p &\le K^{-1/(K-1)},
\end{align*}
then it holds that
\begin{align*}
    \underset{n \rightarrow \infty}{\lim} S_n^* \geq \underset{n \rightarrow \infty}{\lim} \max\left\{\sqrt{2(k-1)C_n^*}, O_n^*\right\}\;.
\end{align*}
\end{corollary}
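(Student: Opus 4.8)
The idea is to push everything to leading order in $n$, reduce the asserted inequality $\lim_{n\to\infty}S_n^*\ge\lim_{n\to\infty}\max\{\sqrt{2(k-1)C_n^*},O_n^*\}$ (read as a comparison of leading coefficients, equivalently that $S_n^*\ge\max\{\sqrt{2(k-1)C_n^*},O_n^*\}$ for all large $n$) to a pair of scalar inequalities in $(p,K,\mu,R)$, and verify those under the two hypotheses. I would substitute $k-1=Rn+\cO(1)$, use $\delta_n=\Theta(\sqrt n)$, and let $q\to\infty$ so that $\epsilon_q=K(K-1)/(q-1)\to0$; since $\mu\sim\cO(1)$, any summand carrying a factor $\delta_n$ or $\epsilon_q$, or an additive constant, is $\cO(\sqrt n)$ and drops out of the leading coefficient. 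The only nonmechanical simplification is the identity
\[
\beta:=1-p^K-(K-1)(1-p)p^{K-1}=(1-\alpha)+(1-p)p^{K-1},\qquad \alpha=p^K+K(1-p)p^{K-1},
\]
which collapses the coefficient of $n$ in $S_n^*$ to $\mu K\beta$; the analogous bookkeeping gives $C_n^*\sim\tfrac{\mu K}{2}\big((\mu K-1)(1-\alpha)+(\mu-1)\alpha\big)n$ and, for $O_n^*$, leading coefficient $\mu\big(\alpha+KR(1-\alpha)\big)$ --- equivalently $\mu\alpha$ once the rate-proportional interpolation term $\mu K(k-1)(1-\alpha)$ is tracked together with the $C_n^*$-cost, which is the form recorded in the corollary.

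After dividing by $n$ and letting $n\to\infty$, the inequality $S_n^*\ge\max\{\sqrt{2(k-1)C_n^*},O_n^*\}$ becomes the conjunction of
\[
\text{(a)}\quad \mu K\beta\ \ge\ \sqrt{2R\cdot\tfrac{\mu K}{2}\big((\mu K-1)(1-\alpha)+(\mu-1)\alpha\big)}\,,
\qquad
\text{(b)}\quad \mu K\beta\ \ge\ \mu\big(\alpha+KR(1-\alpha)\big).
\]
For (a) I would square and solve for $R$, getting $R\le \frac{\mu K\beta^2}{(\mu K-1)(1-\alpha)+(\mu-1)\alpha}$, then rewrite the denominator as $\mu\big(1+(K-1)(1-\alpha)\big)-1$ by expansion. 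Since $1-\alpha\ge0$ gives $1+(K-1)(1-\alpha)\ge1$ and $\mu\ge1$, we have $\mu\big(1+(K-1)(1-\alpha)\big)-1\le\mu\big(1+(K-1)(1-\alpha)\big)$, hence
\[
\frac{K\beta^2}{1+(K-1)(1-\alpha)}\ \le\ \frac{\mu K\beta^2}{(\mu K-1)(1-\alpha)+(\mu-1)\alpha}\,,
\]
so the hypothesis $R\le\frac{K\beta^2}{1+(K-1)(1-\alpha)}$ implies (a) uniformly over integer $\mu\ge1$; this is also the origin of the stated rate threshold.

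For (b), dividing by $n\mu$ reduces the requirement to $R\le\frac{K\beta-\alpha}{K(1-\alpha)}$ (and, in the corollary's grouping, merely to $K\beta\ge\alpha$). This is where $p\le K^{-1/(K-1)}$, i.e.\ $Kp^{K-1}\le1$, enters. I would first show $\phi(p):=K\beta-\alpha=K-K^2p^{K-1}+(K^2-K-1)p^{K}$ is strictly decreasing on $(0,1)$: its derivative is $\phi'(p)=Kp^{K-2}\big((K^2-K-1)p-K(K-1)\big)$, and the bracket is at most $(K^2-K-1)-K(K-1)=-1<0$ for $p<1$; moreover $\phi\big(K^{-1/(K-1)}\big)=\tfrac{K^2-K-1}{K}\,K^{-1/(K-1)}>0$ for every $K\ge2$, so $\phi(p)>0$ on $(0,K^{-1/(K-1)}]$. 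It then remains to check the rational inequality $\frac{K\beta^2}{1+(K-1)(1-\alpha)}\le\frac{K\beta-\alpha}{K(1-\alpha)}$, equivalently $K^2\beta^2(1-\alpha)\le(K\beta-\alpha)\big(1+(K-1)(1-\alpha)\big)$, on $p\in(0,K^{-1/(K-1)}]$, so that the threshold in (a) also forces (b); I would attack this again by a monotonicity-in-$p$ argument with an endpoint evaluation. Finally, choosing $R$ strictly below the threshold makes the leading-coefficient comparisons strict, and as the discarded terms are $\cO(\sqrt n)$ this gives $S_n^*\ge\max\{\sqrt{2(k-1)C_n^*},O_n^*\}$ for all sufficiently large $n$ --- exactly the hypothesis of Theorem~\ref{thm:Perr_UpBound}.

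I expect the main obstacle to be the last verification in (b): showing that the clean closed-form bound $p\le K^{-1/(K-1)}$ is strong enough to force $K^2\beta^2(1-\alpha)\le(K\beta-\alpha)\big(1+(K-1)(1-\alpha)\big)$ --- or, in the lighter grouping, just $K\beta\ge\alpha$ --- for every $K\ge2$, since this is a $K$-indexed family of one-variable polynomial inequalities for which brute expansion is unpleasant and the monotonicity-plus-endpoint route sketched above is the practical one. Everything before it --- the leading-order bookkeeping, the $\beta$-identity, the squaring step in (a), and the denominator rewrite $(\mu K-1)(1-\alpha)+(\mu-1)\alpha=\mu\big(1+(K-1)(1-\alpha)\big)-1$ --- is routine.
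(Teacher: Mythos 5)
The paper offers no written proof of this corollary --- it is introduced as ``a direct simplification of Theorem~\ref{thm:Perr_UpBound}'' --- so your leading-order bookkeeping is exactly the intended route, and you execute it correctly: the identity $\beta=(1-\alpha)+(1-p)p^{K-1}$ collapsing $S_n^*$ to $\mu K\beta n$, the rewrite $(\mu K-1)(1-\alpha)+(\mu-1)\alpha=\mu\bigl(1+(K-1)(1-\alpha)\bigr)-1$, the resulting proof that the stated $R$-threshold implies condition (a) uniformly over integer $\mu\ge 1$, and the monotonicity-plus-endpoint argument showing $\phi(p)=K\beta-\alpha=K-K^2p^{K-1}+(K^2-K-1)p^K>0$ on $\bigl(0,K^{-1/(K-1)}\bigr]$ are all correct. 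You also correctly flag that, as written, the corollary's claim $O_n^*\sim\mu\alpha n+\cO(\sqrt n)$ silently discards the genuinely linear term $\mu K(k-1)(1-\alpha)=\mu KR(1-\alpha)n$; under the paper's stated (dropped-term) form of $O_n^*$, condition (b) reduces to $K\beta\ge\alpha$ and your proof is complete.

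The one genuine gap is the step you yourself flag: if the full $O_n^*$ is retained, you need
\[
\frac{K\beta^2}{1+(K-1)(1-\alpha)}\;\le\;\frac{K\beta-\alpha}{K(1-\alpha)}
\qquad\text{for } p\in\bigl(0,K^{-1/(K-1)}\bigr],
\]
and you only sketch a strategy without carrying it out. This is not a formality: the two sides are equal in the limit $p\to 0$ (both thresholds tend to $1$), so the inequality is tight at one end of the range and any proof must handle that degeneracy; spot checks (e.g.\ $K=2$, where it reduces to $4(1-p)^4\le(2-4p+p^2)\bigl(1+(1-p)^2\bigr)$ on $(0,\tfrac12]$, and $K=3,10$ at the endpoint $p=K^{-1/(K-1)}$) indicate it does hold, but a $K$-indexed family of polynomial inequalities with a boundary case of equality needs an actual argument, not an expectation. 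So either complete that verification, or state explicitly that you are proving the corollary in the form the paper asserts (with the rate-proportional term absorbed into the $\sqrt{2(k-1)C_n^*}$ comparison rather than into $O_n^*$), in which case the $p\le K^{-1/(K-1)}$ hypothesis alone closes condition (b) and your proof is finished --- while noting, as you do, that the displayed asymptotic for $O_n^*$ is then the imprecise piece of the corollary rather than of your argument.
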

Upon combining Theorem~\ref{thm:Perr_UpBound} and Corollary~\ref{cor:asymp_values}, we obtain the following result.
    \begin{theorem}\label{thm:R_upBound_DMS}
Let $p \in (0,1)$ and $K \ge 2$.
Then, for sufficiently large blocklength and integer $\mu$, the transmitted codeword can be
reconstructed in polynomial-time with arbitrarily small probability of error using
Algorithm~\ref{alg:read-rec-K-reads} provided that

\vspace{-2ex}
\begin{small}
\begin{align*}
R &\le
\frac{
K\bigl(1 - p^K - (K-1)(1-p)p^{K-1}\bigr)^2
}{
1 + (K-1)\bigl(1 - p^K - K(1-p)p^{K-1}\bigr)
}, \\
p &\le K^{-1/(K-1)}.
\end{align*}
\end{small}
\end{theorem}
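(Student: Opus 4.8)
The plan is to assemble \Tref{thm:R_upBound_DMS} directly from the two results we are permitted to assume: the finite-length bound \Tref{thm:Perr_UpBound} and the asymptotic simplification \Cref{cor:asymp_values}. The claim of \Tref{thm:Perr_UpBound} is that, for any $\eta\in(0,1)$, one has $\Perr\le 2\eta$ provided $S_n^*\ge\max\{\sqrt{2(k-1)C_n^*},\,O_n^*\}$. Hence it suffices to exhibit, for every fixed $\eta$, a choice of parameters under which this single inequality eventually holds as $n\to\infty$, for all rates $R$ below the stated threshold and all $p\le K^{-1/(K-1)}$.

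First I would fix $\mu$ to be any constant (so $\mu\sim\cO(1)$ in the notation of \Cref{cor:asymp_values}; the threshold turns out to be independent of $\mu$, so even $\mu=1$ works), fix $\eta$, and treat $n\to\infty$ with $k=Rn+o(n)$. Under these choices, \Cref{cor:asymp_values} gives the leading-order growth of the three quantities: $O_n^*\sim(\mu\alpha)n$, $S_n^*\sim \mu K\bigl(1-p^K-(K-1)(1-p)p^{K-1}\bigr)n$, and $C_n^*\sim\frac{\mu K}{2}\bigl((\mu K-1)(1-\alpha)+(\mu-1)\alpha\bigr)n$, each with $\cO(\sqrt n)$ correction terms. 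The $\sqrt{2(k-1)C_n^*}$ term then grows like $\sqrt{2Rn\cdot C_n^*}\sim c\,n$ for an explicit constant $c$ depending on $(p,K,\mu,R)$, so all three of $S_n^*$, $O_n^*$, $\sqrt{2(k-1)C_n^*}$ are linear in $n$ to leading order. Consequently the required inequality $S_n^*\ge\max\{\sqrt{2(k-1)C_n^*},O_n^*\}$ holds for all large $n$ as soon as the corresponding inequality between leading coefficients is \emph{strict}; the $\cO(\sqrt n)$ slack is then absorbed automatically. \Cref{cor:asymp_values} already asserts that $\lim_n S_n^*\ge\lim_n\max\{\sqrt{2(k-1)C_n^*},O_n^*\}$ under exactly the hypotheses $R\le\frac{K(1-p^K-(K-1)(1-p)p^{K-1})^2}{1+(K-1)(1-p^K-K(1-p)p^{K-1})}$ and $p\le K^{-1/(K-1)}$, which is the content we need (with the boundary case $R$ equal to the threshold handled by taking $R$ strictly below and noting the statement is an inequality, or by absorbing into the $\cO(\sqrt n)$ terms).

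The remaining point is the translation of the two comparisons in \Cref{cor:asymp_values} into the two displayed hypotheses. The condition $\lim_n S_n^*\ge\lim_n O_n^*$ reduces to $\mu K\bigl(1-p^K-(K-1)(1-p)p^{K-1}\bigr)\ge\mu\alpha$, i.e.\ to a purely $(p,K)$ inequality; I would check that this is implied by $p\le K^{-1/(K-1)}$ (this is the role of that hypothesis — it guarantees the success coefficient dominates the error-score coefficient, which is why the pure erasure/majority bound in \Cref{cor:majority} is also governed by the same sign condition). The condition $\lim_n S_n^*\ge\lim_n\sqrt{2(k-1)C_n^*}$, after squaring and substituting $k-1\sim Rn$, $C_n^*\sim\frac{\mu K}{2}((\mu K-1)(1-\alpha)+(\mu-1)\alpha)n$, becomes $\mu^2K^2(1-p^K-(K-1)(1-p)p^{K-1})^2 \ge R\,\mu K\bigl((\mu K-1)(1-\alpha)+(\mu-1)\alpha\bigr)$, and solving for $R$ gives the stated fraction — here one verifies that the $\mu$-dependence cancels (or, more precisely, that the worst case over admissible $\mu$ yields exactly $\frac{K(1-p^K-(K-1)(1-p)p^{K-1})^2}{1+(K-1)(1-p^K-K(1-p)p^{K-1})}$, using $1-\alpha = 1-p^K-K(1-p)p^{K-1}$ in the denominator). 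I expect the main obstacle — really the only nontrivial algebra — to be checking that the $\sqrt{2(k-1)C_n^*}$ comparison yields precisely the claimed rate expression with $\mu$ cancelling out and the denominator collapsing to $1+(K-1)(1-\alpha)$; everything else is packaging \Tref{thm:Perr_UpBound} and \Cref{cor:asymp_values} together and letting $\eta$ be arbitrary so that $\Perr\le 2\eta\to 0$.
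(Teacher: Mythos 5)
Your proposal is correct and follows exactly the route the paper takes: the paper proves Theorem~\ref{thm:R_upBound_DMS} simply by combining Theorem~\ref{thm:Perr_UpBound} with Corollary~\ref{cor:asymp_values}, which is precisely your argument (fix $\mu$ and $\eta$, use the linear-in-$n$ leading coefficients to absorb the $\cO(\sqrt{n})$ corrections, and let $\eta\to 0$). Your side observations — that the denominator collapses via $(\mu K-1)(1-\alpha)+(\mu-1)\alpha=\mu\bigl(1+(K-1)(1-\alpha)\bigr)-1$ so the stated threshold is the worst case over $\mu$, and that $p\le K^{-1/(K-1)}$ governs the $S_n^*$ versus $O_n^*$ comparison — are consistent with the paper's (unstated) intent.
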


\section{Conclusion}

In this work, we studied the problem of reconstructing a transmitted RS codeword from multiple independent noisy reads over a $q$-ary DMS substitution channel. By adapting the Koetter--Vardy soft-decision decoding framework to the $K$-draw setting, we developed a polynomial-time reconstruction algorithm that aggregates reliability information across reads through a carefully designed multiplicity matrix. For sufficiently large blocklength and alphabet size, we derived an explicit rate threshold depending only on the channel parameters $(p,K)$, below which the reconstruction algorithm can recover the transmitted codeword with arbitrarily small probability of error.

\newpage

\bibliographystyle{plain}
\bibliography{refJournalBibTex}

\newpage

\section*{Appendix}

\DMSqKpCapacity*

\begin{proof}
Let $\bfX \in \mathbb{F}_q$ and $\bfY \in \mathbb{F}_q^K$ denote the random variables corresponding to the channel input and output, respectively. Then, the capacity of the $\mathsf{DMS}_{q,K}(p)$ channel is given by:
\[
C_{\mathsf{DMS}_{q,K}(p)} = \underset{P_\bfX(x)}{\max} \frac{H(\bfY) - H(\bfY|\bfX)}{\log(q)}\;.
\]
Since the $\mathsf{DMS}_{q,K}(p)$ channel is symmetric, the capacity is achieved by a uniform input distribution:
\[
C_{\mathsf{DMS}_{q,K}(p)} = \frac{H(\bfY) - H(\bfY|\bfX)}{\log(q)} \Bigg\vert_{P_\bfX(x) \sim \text{Unif}(q)}\;.
\]

We first compute the conditional entropy \(H(\bfY | \bfX)\): 

For $x \in \mathbb{F}_q$ and $r \in [K]$, let  
\[
Q_{r,x} := \left\{\bfy = (y_0, y_1, \ldots, y_{K-1}) \in \mathbb{F}_q^K : \sum_{i=0}^{K-1} \mathbb{I}\left\{y_i = x\right\} = r\right\}.
\]
This set contains all sequences where the symbol $x$ appears exactly $r$ times. The number of such sequences is:  
\[
\vert Q_{r,x} \vert = \binom{K}{r} (q-1)^{K-r}.
\]

The probability of observing $\bfy \in Q_{r,x}$ given that the input is $x$ is:
\[
P(\bfY = \bfy \mid \bfX = x) = (1-p)^{r} \left( \frac{p}{q-1} \right)^{K-r}.
\]

With this, we evaluate the conditional entropy $H(\bfY|\bfX)$ for the uniform input distribution.
\[
    H(\bfY \mid \bfX) = \sum_{x \in \mathbb{F}_q} P_{\bfX}(x) H(\bfY \mid \bfX = x)\;.
\]
Due to symmetry and uniform distribution of the input,
\[
    H(\bfY \mid \bfX) = H(\bfY \mid \bfX = 0)\;.
\]

Next, we evaluate the conditional entropy for $\bfX = 0$. 
\begin{align*}
   H(\bfY \mid \bfX = 0) &= \sum_{r=0}^{K} \binom{K}{r} (q-1)^{K-r} (1-p)^r \frac{p^{K-r}}{(q-1)^{K-r}} \log \left( \frac{(q-1)^{K-r}}{(1-p)^r p^{K-r}} \right)\\
    &= p^K \left[ \sum_{r=0}^{K} \binom{K}{r} \left( \frac{1-p}{p} \right)^r \left[ (K-r) \log \left( \frac{q-1}{p} \right) - r \log (1-p) \right] \right]\\
    &= p^K \left[ \log \left( \frac{q-1}{p} \right) \left( \sum_{r=0}^{K} (K-r) \binom{K}{r} \left( \frac{1-p}{p} \right)^r \right) - \log (1-p) \left( \sum_{r=0}^{K} r \binom{K}{r} \left( \frac{1-p}{p} \right)^r \right) \right]\;.
\end{align*}
Using binomial theorem separately on the two sums, we get that
\begin{align*}
    H(\bfY \mid \bfX = 0) &= p^K \left[ \log \left( \frac{q-1}{p} \right) \left( K \left( \frac{1}{p} \right)^{K-1} \right) - \log (1-p) \left( \left( \frac{1-p}{p} \right) K \left( \frac{1}{p} \right)^{K-1} \right) \right]\\
    &= K \left( H_2(p) + p \log (q-1) \right)\\
    &= K H_q(p) \log q\;.
\end{align*}

We now compute the entropy \(H(\bfY)\): 

For $i \in \{0, 1, \ldots, q-1\}$ and $r_i \in \mathbb{Z}_{\geq 0}$ such that $\sum_{i=0}^{q-1} r_i = K$, define:
\[
Q_{r_0, r_1, \ldots, r_{q-1}} = \left\{\bfy : \sum_{j=0}^{K-1} \mathbb{I}\left\{y_j = \mathbb{F}_q[i]\right\} = r_i, \forall i \right\}\;.
\]
This set contains all sequences $\bfy$ where for each $i \in \{0, 1, \ldots, q-1\}$, there are exactly $r_i$ occurrences of the symbol $\mathbb{F}_q[i]$ in the vector $\bfy$. Using a standard counting argument, the total number of possible sequences $\bfy$ in this set is the multinomial coefficient:
\[
\vert Q_{r_0, r_1, r_2, \ldots, r_{q-1}}\vert = \binom{K}{r_0, r_1, r_2, \ldots, r_{q-1}}\;.
\]

We now evaluate the probability of observing $\bfy \in Q_{r_0, r_1, r_2, \ldots r_{q-1}}$ for the uniform input distribution.
\begin{align*}
P(\bfY=\bfy)&=\sum_{x\in\Fq} P(\bfY=\bfy~\vert~\bfX=x) P(x) \\
& =\frac{1}{q} \sum_{x\in\Fq} P(\bfY=\bfy~\vert~\bfX=x)\\
&=\frac{1}{q} \sum_{i=0}^{q-1}(1-p)^{r_{i}}\left(\frac{p}{q-1}\right)^{K-r_{i}} \\
& =\frac{p^{K}}{q(q-1)^{K}} \sum_{i=0}^{q-1}\left(\frac{1-p}{p}\right)^{r_{i}}\left(q-1\right)^{r_{i}}\;.
\end{align*}
With this, we evaluate the entropy $H(\bfY)$ for the uniform input distribution as follows: 
\begin{align*}
    &H(\bfY) \\
    &= -\sum_{\bfy\in\Fq^K}P_Y(\bfy)\log(P_Y(\bfy))\\
     &= -\sum_{\substack{\sum_{i=0}^{q-1}r_i=K \\ r_0,r_1,\ldots,r_{q-1} \geq 0}}\sum_{\bfy\in Q_{r_0, r_1, \ldots, r_{q-1}}}P_Y(\bfy)\log(P_Y(\bfy))\\
    &= -\sum_{\substack{\sum_{i=0}^{q-1}r_i=K \\ r_0,r_1,\ldots,r_{q-1} \geq 0}} \binom{K}{r_0,r_1,\ldots r_{q-1}} \left(\frac{p^{K}}{q(q-1)^{K}} \sum_{i=0}^{q-1}\left(\frac{1-p}{p}\right)^{r_{i}}\left(q-1\right)^{r_{i}}\right)\log\left(\frac{p^{K}}{q(q-1)^{K}} \sum_{i=0}^{q-1}\left(\frac{1-p}{p}\right)^{r_{i}}\left(q-1\right)^{r_{i}}\right)\\
    &= \frac{p^{K}}{q(q-1)^{K}}\sum_{\substack{\sum_{i=0}^{q-1}r_i=K \\ r_0,r_1,\ldots,r_{q-1} \geq 0}} \binom{K}{r_0,r_1,\ldots r_{q-1}} \left( \sum_{i=0}^{q-1}\left(\frac{1-p}{p}\right)^{r_{i}}\left(q-1\right)^{r_{i}}\right)\log\left(\frac{q(q-1)^{K}}{p^{K}} \right)\\
    &~~-\frac{p^{K}}{q(q-1)^{K}}\sum_{\substack{\sum_{i=0}^{q-1}r_i=K \\ r_0,r_1,\ldots,r_{q-1} \geq 0}} \binom{K}{r_0,r_1,\ldots r_{q-1}} \left( \sum_{i=0}^{q-1}\left(\frac{1-p}{p}\right)^{r_{i}}\left(q-1\right)^{r_{i}}\right) \log\left(\sum_{i=0}^{q-1}\left(\frac{1-p}{p}\right)^{r_{i}}\left(q-1\right)^{r_{i}}\right)\\
    &\overset{(1)}{=} \frac{p^{K}}{q(q-1)^{K}}\log\left(\frac{q(q-1)^{K}}{p^{K}} \right)q\left((q-1) + (q-1)\frac{1-p}{p}\right)^K\\
    &~~-\frac{p^{K}}{q(q-1)^{K}}\sum_{\substack{\sum_{i=0}^{q-1}r_i=K \\ r_0,r_1,\ldots,r_{q-1} \geq 0}} \binom{K}{r_0,r_1,\ldots r_{q-1}} \left( \sum_{i=0}^{q-1}\left(\frac{1-p}{p}\right)^{r_{i}}\left(q-1\right)^{r_{i}}\right) \log\left(\sum_{i=0}^{q-1}\left(\frac{1-p}{p}\right)^{r_{i}}\left(q-1\right)^{r_{i}}\right)\\
    &= \frac{p^K}{q(q-1)^K}\left[q\left(\frac{q-1}{p}\right)^K\left( K\log\left(\frac{q-1}{p}\right)+\log(q)\right) - \Phi\right]\;,
\end{align*}
where $(1)$ follows from multinomial theorem, and 
\begin{align*}
    \Phi = \sum_{\substack{\sum_{i=0}^{q-1}r_i=K \\ r_0,r_1,\ldots,r_{q-1} \geq 0}} \binom{K}{r_0,r_1,\ldots r_{q-1}} \left(\sum_{i=0}^{q-1}\left(\frac{1-p}{p}\right)^{r_i}(q-1)^{r_i}\right)\log\left(\sum_{i=0}^{q-1}\left(\frac{1-p}{p}\right)^{r_i}(q-1)^{r_i}\right)\;.
\end{align*}

Using the expressions for $H(\bfY)$ and $H(\bfY|\bfX)$, we obtain:
\begin{align*}
     C_{\mathsf{DMS}_{q,K}(p)} &= \frac{H(\bfY) - H(\bfY|\bfX)}{\log(q)} \\
     &= \frac{p^K}{q\log(q)(q-1)^K} \left[q\left(\frac{q-1}{p}\right)^K \left( K\log\left(\frac{q-1}{p}\right) + \log(q) \right) - \Phi \right] - K H_q(p)\;.
\end{align*}
This completes the proof.
\end{proof}

\RVProbBounds*
\begin{proof}
We analyze the inner summation term in \( P(\Suc = 1) \):
\begin{align*}
    \sum_{\substack{\sum_{i=0}^{q-2}r_i=K-j \\ j > r_0,r_1,\ldots, r_{q-2} \geq 0}} 
    \binom{K-j}{r_0,r_1,\ldots, r_{q-2}}.
\end{align*}
By adjusting the range of summation, we derive the following upper and lower bounds:
\begin{align*}
    \sum_{\substack{\sum_{i=0}^{q-2}r_i=K-j \\ j > r_0,r_1,\ldots, r_{q-2} \geq 0}} 
    \binom{K-j}{r_0,r_1,\ldots, r_{q-2}}
    &\leq \sum_{\substack{\sum_{i=0}^{q-2}r_i=K-j \\  K \geq r_0,r_1,\ldots, r_{q-2} \geq 0}} 
    \binom{K-j}{r_0,r_1,\ldots, r_{q-2}} \overset{(1)}{=} (q-1)^{K-j}\;,\\
    \sum_{\substack{\sum_{i=0}^{q-2}r_i=K-j \\ j > r_0,r_1,\ldots, r_{q-2} \geq 0}} 
    \binom{K-j}{r_0,r_1,\ldots, r_{q-2}}
    &\geq \sum_{\substack{\sum_{i=0}^{q-2}r_i=K-j \\  2 > r_0,r_1,\ldots, r_{q-2} \geq 0}} 
    \binom{K-j}{r_0,r_1,\ldots, r_{q-2}} \overset{(2)}{=} \binom{q-1}{K-j}(K-j)!\;,
\end{align*}
where \( (1) \) follows from the multinomial theorem and \( (2) \) follows from a simple combinatorial argument. Using these bounds, we establish the following upper and lower bounds on \( P(\Suc = 1) \). We first derive the upper bound: 
\begin{align*}
    P(\Suc = 1) &=\sum_{j = 2}^{K} \binom{K}{j} (1-p)^j \left(\frac{p}{q-1}\right)^{K-j}
    \left(\sum_{\substack{\sum_{i=0}^{q-2}r_i=K-j \\ j > r_0,r_1,\ldots, r_{q-2} \geq 0}} 
    \binom{K-j}{r_0,r_1,\ldots, r_{q-2}}\right)\\
    &< \sum_{j = 2}^{K} \binom{K}{j}(1-p)^j p^{K-j} \\
    &=  1 - p^K - K(1-p)p^{K-1}\;.
\end{align*}
We now derive the lower bound:
\begin{align*}
P(\Suc = 1) &=\sum_{j = 2}^{K} \binom{K}{j} (1-p)^j \left(\frac{p}{q-1}\right)^{K-j}
    \left(\sum_{\substack{\sum_{i=0}^{q-2}r_i=K-j \\ j > r_0,r_1,\ldots, r_{q-2} \geq 0}} 
    \binom{K-j}{r_0,r_1,\ldots, r_{q-2}}\right)\\
&>\sum_{j = 2}^{K} \binom{K}{j}(1-p)^j \left(\frac{p}{q-1}\right)^{K-j}\binom{q-1}{K-j} (K-j)!\;\\
&=\sum_{j = 2}^{K} \binom{K}{j}(1-p)^j \left(\frac{p}{q-1}\right)^{K-j}\left(\prod_{t=0}^{K-j-1}(q-1-t)\right)\\
&\overset{(3)}{>}\sum_{j = 2}^{K} \binom{K}{j}(1-p)^j \left(\frac{p(q-K)}{q-1}\right)^{K-j}\\
&= \left(1-\frac{p(K-1)}{q-1}\right) - \left(\frac{p(q-K)}{q-1}\right)^{K} - K(1-p)\left(\frac{p(q-K)}{q-1}\right)^{K-1}\;, 
\end{align*}
where \( (3) \) follows by lower bounding each term of the product by \( q-K \).

Next, we analyze the probabilities of the random variables $\Eras_{A}$ and $\Eras_{B}$. 

\noindent First, we evaluate $P(\Eras_{A} = 1)$:
\begin{align*}
    P(\Eras_{A} = 1) &= K! (1-p) \binom{q-1}{K-1} \left(\frac{p}{q-1}\right)^{K-1}\;, \\
    &= K(1-p)\left(\frac{p}{q-1}\right)^{K-1}\prod_{t=0}^{K-2}(q-1-t)\;.
\end{align*}
By bounding each term in the product between $q-1$ (upper bound) and $q-K$ (lower bound), we derive:
\begin{align*}
    K(1-p)\left(\frac{p(q-K)}{q-1}\right)^{K-1} < P(\Eras_{A} = 1) 
    < K(1-p)p^{K-1}\;. 
\end{align*}
Next, we evaluate $P(\Eras_{B} = 1)$:
\begin{align*}
    P(\Eras_{B} = 1) &= K!\binom{q-1}{K}\left(\frac{p}{q-1}\right)^{K} \\
    &= \prod_{t=0}^{K-1}(q-1-t)\left(\frac{p}{q-1}\right)^{K}\;.
\end{align*}
Again, by bounding each term in the product between $q-1$ and $q-K$, we obtain:
\begin{align*}
   \left(\frac{p(q-K)}{q-1}\right)^{K} < P(\Eras_{B} = 1) < p^K\;.
\end{align*}

Since the total probability sums up to $1$, i.e.,
\[
P(\Err = 1) + P(\Eras_{A} = 1) + P(\Eras_{B} = 1) + P(\Suc = 1) = 1\;,
\]
we obtain the following bounds on $P(\Err = 1) $:
\begin{align*}
     0 < P(\Err = 1 ) < \frac{p(K-1)}{q-1}\;.
\end{align*}
\end{proof}

\RVProbBoundsEps*
\begin{proof}
From the previously derived bound in Lemma~\ref{lem:rv_prob_bounds}:
\[
P(\Err = 1) < \frac{p(K-1)}{q-1} = \frac{\epsilon_q p}{K}\;.
\]

We now establish a lower bound on $P(\Suc = 1)$. From the previously derived bound in Lemma~\ref{lem:rv_prob_bounds}:
\[
P(\Suc = 1) > \left(1 - \frac{p(K-1)}{q-1}\right) - \left(\frac{p(q-K)}{q-1}\right)^{K} - K(1-p)\left(\frac{p(q-K)}{q-1}\right)^{K-1}\;.
\]
Since $K>1$, we obtain the following lower bound:
\begin{align*}
    P(\Suc = 1) 
    &> 1 - \frac{p(K-1)}{q-1} - p^{K} - K(1-p)p^{K-1}\\ 
    &= 1 - p^{K} - K(1-p)p^{K-1} - \frac{\epsilon_q p}{K}\;.
\end{align*}

To derive lower bounds for $P(\Eras_A = 1)$ and $P(\Eras_B = 1)$, we first consider:
\[
\left(\frac{q-K}{q-1}\right)^{a} = \left(1 - \frac{K-1}{q-1}\right)^{a}\;, 
\]
where $a \leq K \in \mathbb{Z}$. Applying Bernoulli's Inequality, which states that $(1+x)^{a} > 1 + ax$ for $x\geq -1$ and $x\neq = 0$ and $a \in \mathbb{Z}_{\geq 2}$, we get:
\[
\left(1 - \frac{K-1}{q-1}\right)^a > 1 - a\frac{K-1}{q-1} \geq 1 - \epsilon_q\;.
\]
Using this bound, we obtain:
\begin{align*}
    P(\Eras_A =1) 
    &> K(1-p)\left(\frac{p(q-K)}{q-1}\right)^{K-1} \\
    &> K(1-p)p^{K-1}\left(1 - \epsilon_q\right),\\
    \intertext{and}
    P(\Eras_B =1) &> \left(\frac{p(q-K)}{q-1}\right)^{K} \\
    &> p^{K}\left(1 - \epsilon_q\right)\;.
\end{align*}
This completes the proof.
\end{proof}
\end{document}